\numberwithin{equation}{section}
\theoremstyle{plain}
\newtheorem{thm}{Theorem}[section]
\newtheorem{Mthm}{Main Theorem}
\newtheorem{prop}[thm]{Proposition}
\newtheorem{cor}[thm]{Corollary}
\newtheorem{lem}[thm]{Lemma}
\newtheorem{de}[thm]{Definition}
\newtheorem{rem}[thm]{Remark}
\newtheorem{ex}[thm]{Example}
\newcommand{\eqa}{\begin{eqnarray}}
\newcommand{\eeqa}{\end{eqnarray}}
\newcommand{\beq}{\begin{equation}}
\newcommand{\eeq}{\end{equation}}
\newcommand{\nn}{\nonumber}
\newcommand{\p}{\partial}
\def \la {\langle}
\def \ra{\rangle}
\def \var{\varepsilon}
\def \dsum{\displaystyle\sum}
\def \trf{\omega \,}
\begin{document}

\title[]
{Frobenius manifolds and Frobenius algebra-valued
integrable systems}
\author[]{Ian A.B. Strachan and Dafeng Zuo}

\address{Ian A.B. Strachan,School of Mathematics and Statistics, University of Glasgow}
\email{ian.strachan@glasgow.ac.uk}

\address[]{Dafeng Zuo,School of Mathematical Science,University of Science and
Technology of China, Hefei 230026, P.R.China and Wu Wen-Tsun Key Laboratory of Mathematics,
USTC, Chinese Academy of Sciences and School of Mathematics}

\email{dfzuo@ustc.edu.cn}

\date{\today}

\begin{abstract}
The notion of integrability will often extend from
systems with scalar-valued fields to systems with algebra-valued fields. In such extensions the properties of,
and structures on, the algebra play a central role in ensuring integrability is preserved. In this paper a new theory of Frobenius-algebra valued integrable systems
is developed.

This is achieved for systems derived from Frobenius manifolds by utilizing the theory of tensor products for such manifolds, as developed by Kaufmann, Kontsevich and Manin \cite{Kaufmann,KMK}.
By specializing this construction, using a fixed Frobenius algebra $\mathcal{A},$ one can arrive at such a theory. More generally one can apply the same idea to construct an $\mathcal{A}$-valued Topological Quantum Field Theory.

The Hamiltonian properties of two classes of integrable evolution equations are then studied: dispersionless and dispersive evolution equations. Application of these ideas are discussed and, as an example, an $\mathcal{A}$-valued modified Camassa-Holm equation is constructed.

\end{abstract}


 \keywords{Frobenius manifolds, tensor product, Frobenius algebra-valued integrable systems}

\maketitle 
{ \tableofcontents}

\section{Introduction}

Of the many ways to generalize the Korteweg-de Vries equation $u_t=u_{xxx}+6 u u_x\,,$ the one
that will be of most relevance to this
paper is the matrix generalization (see, for example, \cite{BI-1, BI-2})
\begin{equation}
\mathcal{U}_t = \mathcal{U}_{xxx} + 3 \mathcal{U} \mathcal{U}_x+ 3 \mathcal{U}_x \mathcal{U}\,,
\label{matrixKdV}
\end{equation}
where the two first derivative terms are required due to the non-commutativity of matrix
multiplication. If one restricts such an equation to
the space of commuting matrices one arrives at the equation $\mathcal{U}_t =
\mathcal{U}_{xxx} + 6 \mathcal{U} \mathcal{U}_x$ which is identical in form to the original
KdV equation but with a matrix-valued, as opposed to a scalar-valued, field (see, for example, \cite{Kup2000,Zuo-2013,Zuo-2014-1}).
The purpose of this paper is to construct $\mathcal{A}$-valued, where $\mathcal{A}$ is a Frobenius algebra, generalizations of integrable
systems, starting with those associated to an underlying Frobenius manifold and related dispersionless hierarchies, and extending the ideas to topological quantum field theories
and dispersive hierarchies.

\medskip

The structure of this paper may be summarized in the following diagram:

\[
\begin{array}{ccc}
\left\{
\begin{array}{c}
\mathcal{A}-{\rm valued}\\
{\rm Frobenius~manifold~(\S 2)}
\end{array}
\right\}
& \longrightarrow &
\left\{
\begin{array}{c}
\mathcal{A}-{\rm valued}\\
{\rm TQFT~(\S 3)}
\end{array}
\right\}\\
&&\\
\Big\downarrow && \\
&&\\
\left\{
\begin{array}{c}
\mathcal{A}-{\rm valued~bi-Hamiltonian}\\
{\rm dispersionless~systems~(\S 4)}
\end{array}
\right\}
& \longrightarrow &
\left\{
\begin{array}{c}
\mathcal{A}-{\rm valued~bi-Hamiltonian}\\
{\rm dispersive~systems~(\S 5)}
\end{array}
\right\} \\
\end{array}
\]
The full reconstruction of a dispersive hierarchy (the missing vertical arrow in the above diagram) remains an open problem, even before
one considers $\mathcal{A}$-valued systems.

\medskip

The starting point (Section 2) for the study of such $\mathcal{A}$-valued hierarchies is the classical construction of
Dubrovin \cite{Du} which associates to a Frobenius manifold a bi-Hamiltonian hierarchy of hydrodynamic type.
By constructing the tensor product \cite{Kaufmann,KMK} of such a manifold with a trivial Frobenius manifold
(i.e. a fixed algebra) one automatically obtains a new Frobenius manifold and hence a bi-Hamiltonian hierarchy.
The component fields of this new hierarchy can then be reassembled to form an $\mathcal{A}$-valued hierarchy.
The important feature of this construction is a simple, explicit, form of the new prepotential that defines the
$\mathcal{A}$-valued hierarchies.

\medskip

More explicitly, given a Frobenius algebra $\mathcal{A}$ with basis $e_i\,,i=1\,,\ldots\,,n\,,$ one can replace the flat
coordinates of a Frobenius manifold $\mathcal{M}$ with $\mathcal{A}$-valued fields via the map
\[
\hat{}\,:\,t^\alpha \mapsto \widehat{t^\alpha} = t^{(\alpha i)} e_i,\quad\alpha=1\,,\ldots \,, m,\quad i=1\,,\ldots\,,n\nn
\]
and this action can be extended to functions, at least in the case of analytic Frobenius manifolds (and to wider classes of functions - see the Appendix). Conversely, an $\mathcal{A}$-valued
field can be reduced to a scalar field via the Frobenius form (or trace form) $\omega\,.$ This construction is described in Section \ref{section2}.
The main result is the following:

\begin{Mthm}{\rm(Theorem \ref{MainA})}
{Let $F$ be the prepotential of a Frobenius manifold $\mathcal{M}$
and let $\mathcal{A}$ be a trivial Frobenius algebra with 1-form $\omega\,.$ The
function
\[
{F}^\mathcal{A} = \omega\left(\widehat{F}\right)
\]
defines a Frobenius manifold, namely the manifold $\mathcal{M}\otimes \mathcal{A}\,.$ }\end{Mthm}
\medskip

\noindent Normally the prepotential of a tensor product of Frobenius manifolds bears little resemblance to the underlying prepotentials,
and in any case is only defined implicitly from the original prepotentials. However when one of the manifolds is trivial, the above closed form of the
new prepotential exists and this enables the resulting hierarchies to be constructed explicitly.

\medskip

In Section \ref{ATQFT} we extend these ideas to a full Topological Quantum Field Theory on the big phase space $\mathcal{M}^\infty$, i.e. with gravitational
descendent fields.

\begin{Mthm} {\rm(Theorem \ref{MthmB})} Let $\mathcal{F}_{g\geq 0}$ be the prepotentials defining a TQFT, $\mathcal{S}$ and $\mathcal{D}$ the corresponding
String and Dilaton vector fields and $\mathcal{A}$ be a trivial Frobenius algebra. Let $f$ be an analytic function on
$\mathcal{M}^\infty$ and define the $\mathcal{A}$-valued
function $\hat{f}$ to be:
\begin{equation}
{\hat{f}} = \left. f\right|_{t^\alpha_N \mapsto t^{(\alpha i)}_N e_i}\,,\qquad N\in\mathbb{Z}_{\geq 0}\,,\quad\alpha=1\,,\ldots \,, m,\quad i=1\,,\ldots\,,n\,.\nn
\end{equation}
Then the functions
\[
\mathcal{F}^{\mathcal{A}}_{g\geq 0} = \omega\left( {\widehat{ \mathcal{F}}_{g\geq 0} }\right)
\]
and vector fields
\begin{eqnarray*}
\mathcal{S}^\mathcal{A} = -\sum_{N,(\alpha i)} \tilde{t}^{(\alpha i)}_N \tau_{N-1,(\alpha i)},\quad
\mathcal{D}^\mathcal{A} = -\sum_{N,(\alpha i)} \tilde{t}^{(\alpha i)}_N \tau_{N,(\alpha i)}
\end{eqnarray*}
satisfy the axioms of a Topological Quantum Field Theory.
\end{Mthm}
In the remaining sections a theory of $\mathcal{A}$-valued integrable systems is developed, first for dispersionless
systems and then for certain dispersive systems.
More specifically, in section \ref{dispersionlesssection} the construction of the $\mathcal{A}$-valued dispersionless (or hydrodynamic) hierarchies is given. The deformed flat coordinates can be
described very simply, and these form the Hamiltonian densities for the new evolution equations. By reassembling the fields these equations
can be written as $\mathcal{A}$-valued evolution equations. To write these in Hamiltonian form requires the definition of a functional
derivative with respect to an $\mathcal{A}$-valued field, and such a derivative was defined in \cite{OS} and with this one can write the flow equations as
$\mathcal{A}$-valued bi-Hamiltonian evolution equations. These ideas are then extended to the dispersive case in section \ref{dispersivesection}.

\medskip

\section{Frobenius manifolds and their tensor products}\label{section2}

\subsection{Frobenius algebras and manifolds}

We begin with the definition of a Frobenius algebra \cite{Du}.

\begin{de}
A Frobenius algebra $\{ \mathcal{A},\circ,e,\omega\}$ over $\mathbb{R}$ satisfies the following conditions:

\begin{itemize}

\item[{\sl (i)}] $\circ\,: \mathcal{A} \times \mathcal{A} \rightarrow \mathcal{A}$ is
a commutative, associative algebra with unity $e$;

\item[{\sl (ii)}] $\omega \in \mathcal{A}^\star$ defines a non-degenerate inner product
$\langle a,b \rangle = \omega(a \circ b)\,.$

\end{itemize}

\end{de}

\noindent Since $\omega(a) = \langle e,a \rangle$ the inner product determines the form
$\omega$ and visa-versa. This linear form $\omega$ is often called a trace form (or Frobenius form).
One dimensional Frobenius algebras are trivial: the requirement of an identity and the non-degeneracy
of the inner product determines the algebra uniquely and the inner product up to a non-zero constant.
Two dimensional algebra are easily classified.

\begin{ex} \label{ex2.2}
Let $\mathcal{A}$ be a $2$-dimensional commutative and associative algebra with a basis $e=e_1, e_2$
satisfying
\beq e_1\circ e_1=e_1,\quad e_1\circ e_2=e_2, \quad e_2\circ e_2=\var e_1+\mu e_2,\quad \var,\mu \in \mathbb{R}.
 \eeq
Obviously, the algebra $\mathcal{A}$ has a matrix
representation as follows
\beq e_1 \mapsto \mathrm{I}_2=\left(\begin{array}{cc}
1&0\\
0&1
\end{array}\right),\quad e_2 \mapsto\left(\begin{array}{cc}
0& {\var}\\
1&\mu
\end{array}\right).\nn \eeq
It is easy to show that:
\begin{itemize}

\item[(1)] if $\mu^2=-4\var$, $\mathcal{A}$ is nonsemisimple, i.e.,
$\exists \, \widetilde{e}=\mu e_1-2 e_2$ such that
$\widetilde{e}\circ\widetilde{e}=0$;
\item[(2)] if $\mu^2\ne-4\var$, then $\mathcal{A}$ is semisimple, i.e., for
any nonzero element $\widetilde{e}=x e_1+y e_2$, $\widetilde{e}\circ\widetilde{e}\ne 0$.
\end{itemize}
Furthermore,  we introduce
two ``basic" trace-type forms for $a=a_1e_1+a_2e_2\in\mathcal{A}$ as follows
\beq \omega_k(a)=a_k+a_2(1-\delta_{k,2})\delta_{\var,0}, \quad k=1,\, 2,\label{ZZ2.4}\eeq
which induce two nondegenerate inner products on $\mathcal{A}$ given by
 \beq \la a, b\ra_k:=\omega_k(a\circ b),\quad a,\,b\in \mathcal{A},\quad k=1,2.\label{SZ2.6}\eeq
The two Frobenius algebras
$\left\{\mathcal{A},\circ, e, \omega_k\right\}$ will be denoted by $\mathcal{Z}_{2,k}^{\var,\mu}$ for $k=1,2$.
\end{ex}

\begin{ex}\label{ex2.3}

Let $\mathcal{A}$ be an $n$-dimensional  nonsemisimple commutative associative
algebra $\mathcal{Z}_n$ over $\mathbb{R}$ with a unity $e$ and a basis $e_1=e, \cdots,
e_n$ satisfying
\beq e_i \circ e_j=\left\{
\begin{array}{ll}
e_{i+j-1}, & i+j\leq n+1,\\
0,&  i+j=n+2. \end{array}\right.
 \eeq
Taking $\Lambda=(\delta_{i,j+1})\in gl(m,\mathbb{R})$, one obtains
a matrix representation of $\mathcal{A}$ as
 \beq e_j \mapsto \Lambda^{j-1}, \quad j=1,\cdots, n.\nn \eeq
Similarly, for any $a=\dsum_{k=1}^{n} a_k e_k\in \mathcal{A}$,
we introduce $n$ trace-type forms, called ``basic" trace-type forms,  as follows
\beq \omega_{k-1} (a)=a_k+a_{n}(1-\delta_{k,n}),\quad k=1,\cdots,n.\label{ZZ2.3} \eeq
Every trace map $\omega_k$ induces a nondegenerate
symmetric bilinear form on $\mathcal{A}$ given by
 \beq \la a, b\ra_k:=\omega_k (a\circ b),\quad a,\,b\in \mathcal{A},\quad k=0,\cdots,n-1.\label{AZ2.6}\eeq
Thus all of $\left\{\mathcal{A}, \circ, e, \omega_{k-1}\right\}$ are nonsemisimple
Frobenius algebras, denoted by $\mathcal{Z}_{n,k-1}$ for $k=1,\cdots,n$.
We remark that if we consider
a linear combination of $n$ ``basic" trace-type forms as
\beq \mathrm{tr}_n:=\dsum_{s=0}^{n-1}\omega_{s}-(n-1)\,\omega_{n-1},\nn\eeq
then $\left\{\mathcal{A}, \circ, e, \mathrm{tr}_n\right\}$ is also a Frobenius
algebra which is exactly  the algebra $\left\{\mathcal{Z}_n,\mathrm{tr}_n\right\}$
used in \cite{Zuo-2013}\footnote{It was the realization that the matrix algebra used in this paper was a specific example of
a Frobenius algebra that led to the development of the current paper.}.

\end{ex}

A Frobenius manifold has such a structure on each tangent space.

\begin{de} \cite{Du} The set $\{\mathcal{M},\circ,e,\langle~,~\rangle,E \}$ is a Frobenius manifold if
each tangent space $T_t\mathcal{M}$ carries a smoothly varying Frobenius algebra
with the properties:

\begin{itemize}

\item[({\sl i})] $\langle~,~\rangle$ is a flat metric on $\mathcal{M}$;

\item[{\sl (ii)}] $\nabla e=0$, where $\nabla$ is the Levi-Civita connection of $\langle\,,\rangle$;

\item[{\sl (iii)}] the tensors $c(u,v,w):=\langle u\circ v,w\rangle$ and $\nabla_zc(u,v,w)$ are totally symmetric;

\item[{\sl (iv)}] A vector field $E$ exists, linear in the flat-variables, such that the
corresponding group of diffeomorphisms acts
by conformal transformation on the metric and by rescalings on the algebra on $T_t\mathcal{M}\,.$

\end{itemize}

\end{de}

\noindent These axioms imply the existence of the prepotential $F$ which satifies the
WDVV-equations of associativity in the flat-coordinates of
the metric (strictly speaking only a complex, non-degenerate bilinear form) on $\mathcal{M}\,.$
The multiplication is then defined by the third derivatives
of the prepotential:
\[
\frac{\partial~}{\partial t^\alpha} \circ \frac{\partial~}{\partial t^\beta}
= c_{\alpha\beta}^{\phantom{\alpha\beta}\gamma}({\bf t}) \frac{\partial~}{\partial t^\gamma}
\]
where
\[
c_{\alpha\beta\gamma} = \frac{ \partial^3 F}{\partial t^\alpha \partial t^\beta \partial t^\gamma}
\]
and indices are raised and lowered using the metric $\eta_{\alpha\beta}
= \langle \dfrac{\partial~}{\partial t^\alpha},\dfrac{\partial~}{\partial t^\beta} \rangle$.

\begin{ex}\label{trivial} Suppose $c_{ij}^{~~k}$ are the structure constants
for the Frobenius algebra $\mathcal{A}$, so $e_i \circ e_j = c_{ij}^{~~k} e_k$
and $\eta_{ij} = \langle e_i,e_j \rangle\,.$ For such an algebra
one obtains a cubic prepotential
\begin{eqnarray*}
F & = & \frac{1}{6} c_{ijk} t^i t^j t^k\,,\\
& = & \frac{1}{6} \omega( {\bf t}\circ{\bf t}\circ{\bf t})\,,\qquad {\bf t} = t^i e_i\,.
\end{eqnarray*}
The Euler vector field takes the form $E=\dsum_{i} t^i\frac{\p}{\p t^i}$ and $E(F)=3F\,.$
The notation $\mathcal{A}$ will be used for {\sl both} the algebra and the corresponding manifold.
\end{ex}

Motivated by the classical K\"unneth formula in cohomology, Kaufmann, Kontsevich and Manin \cite{Kaufmann, KMK} constructed the tensor product of two
Frobenius manifolds $\mathcal{M}^\prime$ and $\mathcal{M}^{\prime\prime}$, denoted $\mathcal{M}^\prime \otimes\mathcal{M}^{\prime\prime}\,.$
The following formulation of this construction is taken from \cite{Du2}. This formulation also gives criteria to check if a particular manifold is the tensor product of two more basic manifolds. For simplicity we use the notation
$\p_\alpha=\dfrac{\p}{\p t^\alpha}$ and $\p_{\alpha\beta}=\dfrac{\p~~}{\p t^{(\alpha\beta)}}$.

\begin{prop}\label{tensor} Let $\mathcal{M}^\prime$ and $\mathcal{M}^{\prime\prime}$
be two Frobenius manifolds of dimension $n^\prime$ and $n^{\prime\prime}\,.$
A Frobenius manifold $\mathcal{M}$ of dimension $n^\prime n^{\prime\prime}$ is the
tensor product $\mathcal{M} = \mathcal{M}^\prime\otimes\mathcal{M}^{\prime\prime}$ if the
following conditions hold:

\begin{itemize}

\item[({\sl i})] $\{ T\mathcal{M}, \langle\,,\rangle,e\}
= \{ T\mathcal{M}^\prime\otimes T\mathcal{M}^{\prime\prime}, \langle\,,
\rangle^\prime\otimes\langle\,,\rangle^{\prime\prime},
e^\prime \otimes e^{\prime\prime} \}\,.$
Flat coordinates are labeled by pairs $t^{(\alpha^\prime\alpha^{\prime\prime})}\,,\alpha^\prime
=1\,,\ldots\,,n^\prime\,,\alpha^{\prime\prime} =1\,,\ldots\,,n^{\prime\prime}\,,$
and the unity vector field is
\[
e=\dfrac{\p~}{\partial t^{(1 1)}}
\]
and the metric $\langle\,,\rangle$ has the form
\[
\eta_{(\alpha^\prime\alpha^{\prime\prime})(\beta^\prime\beta^{\prime\prime})}
= \eta_{\alpha^\prime\beta^\prime} \, \eta_{\alpha^{\prime\prime} \beta^{\prime\prime}}\,.
\]

\item[({\sl ii})] At a point $t^{(\alpha^\prime\alpha^{\prime\prime})}
=0\,,\alpha^\prime>1\,,\alpha^{\prime\prime}>1$ the algebra $T_t\mathcal{M}$ is a tensor
product
\[
T_t\mathcal{M} = T_{t^\prime} \mathcal{M}^\prime \otimes T_{t^{\prime\prime}} \mathcal{M}^{\prime\prime}\,,
\]
that is:
\[
c_{(\alpha^\prime\alpha^{\prime\prime})(\beta^\prime\beta^{\prime\prime})}^{\phantom{(\alpha^\prime
\alpha^{\prime\prime})(\beta^\prime\beta^{\prime\prime})}(\gamma^\prime\gamma^{\prime\prime})}(t)=
c_{\alpha^\prime\beta^\prime}^{\phantom{{\alpha^\prime\beta^\prime}}\gamma^\prime}(t^\prime) \,
c_{\alpha^{\prime\prime}\beta^{\prime\prime}}^{\phantom{{\alpha^{\prime\prime}\beta^{\prime\prime}}}
\gamma^{\prime\prime}}(t^{\prime\prime})\,.
\]

\item[({\sl iii})] If the Euler vector fields of the two manifolds $\mathcal{M}$ and
$\mathcal{M}^{\prime\prime}$ take the form

\begin{eqnarray*}
E^\prime & = &  \sum_{\alpha^\prime} \left[ (1-q_{\alpha^\prime}) t^{\alpha^\prime}
+ r_{\alpha^\prime}\right] \partial_{\alpha^\prime}\,,\\
E^{\prime\prime} & = &  \sum_{\alpha^{\prime\prime}} \left[ (1-q_{\alpha^{\prime\prime}})
t^{\alpha^{\prime\prime}} + r_{\alpha^{\prime\prime}}\right] \partial_{\alpha^{\prime\prime}}\,,\\
\end{eqnarray*}
with scaling dimensions $d^\prime$ and $d^{\prime\prime}$ respectively, then the
 Euler vector field on $\mathcal{M}$ takes the form
\[
E=\sum_{\alpha^\prime,\alpha^{\prime\prime}} (1-q_{\alpha^\prime}-q_{\alpha^{\prime\prime}})
\partial_{(\alpha^\prime\alpha^{\prime\prime})} +
\sum_{\alpha^\prime} r_{\alpha^\prime}\partial_{\alpha^\prime 1^{\prime\prime}} +
\sum_{\alpha^{\prime\prime}} r_{\alpha^{\prime\prime}} \partial_{{1^\prime \alpha^{\prime\prime}}}
\]
and $d=d^\prime+d^{\prime\prime}\,.$

\end{itemize}

\end{prop}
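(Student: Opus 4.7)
The plan is to view Proposition \ref{tensor} as a uniqueness-type characterization: conditions (i)--(iii) pin down the Frobenius structure tightly enough that any manifold satisfying them must coincide with the KKM tensor product $\mathcal{M}^\prime\otimes\mathcal{M}^{\prime\prime}$. I would split the argument into two parts. First, I would verify that the explicit construction of \cite{Kaufmann,KMK} itself satisfies (i)--(iii); this is essentially a repackaging of the classical K\"unneth isomorphism on tangent spaces and the multiplicative behaviour of quasi-homogeneous prepotentials under tensor products, with the base point $t^{(\alpha^\prime\alpha^{\prime\prime})}=0$ (for $\alpha^\prime,\alpha^{\prime\prime}>1$) chosen precisely so that the tensor product algebra structure is manifest before deformation in the transverse directions.

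Second, I would prove uniqueness. By (i), the flat metric $\eta_{(\alpha^\prime\alpha^{\prime\prime})(\beta^\prime\beta^{\prime\prime})}=\eta_{\alpha^\prime\beta^\prime}\,\eta_{\alpha^{\prime\prime}\beta^{\prime\prime}}$ and the identity $e=\partial_{(11)}$ are fixed; by (iii), the Euler vector field and scaling dimension $d=d^\prime+d^{\prime\prime}$ are fixed. The remaining data is the prepotential $F$, which is determined modulo polynomials of degree $\leq 2$ by the third-derivative tensor $c_{(\alpha^\prime\alpha^{\prime\prime})(\beta^\prime\beta^{\prime\prime})(\gamma^\prime\gamma^{\prime\prime})}$. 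Condition (ii) specifies this tensor at the base point, where it factorises as the product of structure constants of the two factors and hence agrees with the prescription of $\mathcal{M}^\prime\otimes\mathcal{M}^{\prime\prime}$ there. I would then invoke the standard Frobenius-manifold reconstruction argument: the third derivatives of $F$ are determined globally from their values at the base point by the combination of (a) WDVV associativity, (b) total symmetry of $\nabla_z c$, and (c) Euler-field quasi-homogeneity $\mathcal{L}_E F = (3-d)F$ modulo quadratic terms.

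The main obstacle is the propagation step in the uniqueness argument: WDVV is an overdetermined nonlinear system, so extending the base-point data to all of $\mathcal{M}$ must respect associativity, flatness, and Euler scaling simultaneously. I would handle this by exploiting analyticity. Both $\mathcal{M}$ and the KKM tensor product are analytic Frobenius manifolds whose Taylor expansions at the base point are forced to agree to all orders: Euler homogeneity controls behaviour in the radial direction, while flatness together with the WDVV equations determines the higher-order transverse derivatives iteratively from the third derivatives at the base point. Consequently the two structures coincide on a neighbourhood of the base point, and therefore everywhere by analytic continuation. This is in effect Dubrovin's local reconstruction principle \cite{Du}, specialised to tensor-product initial data, and it is the nontrivial ingredient that makes (i)--(iii) sufficient rather than merely necessary.
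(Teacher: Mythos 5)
The paper contains no proof of Proposition \ref{tensor} for you to be measured against: the statement is quoted as a known criterion, attributed to Dubrovin \cite{Du2} and resting on the tensor-product construction of Kaufmann, Kontsevich and Manin \cite{Kaufmann,KMK}, and it is subsequently only \emph{used} (in the proof of Theorem \ref{MainA}) to identify $F^{\mathcal{A}}$ with the prepotential of $\mathcal{M}\otimes\mathcal{A}$. So your proposal is an attempt to prove the cited background result itself, and it has to be judged on its own merits.

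On those merits there is a genuine gap, precisely at what you call the propagation step. Condition (ii) prescribes the structure constants on the locus $\{t^{(\alpha'\alpha'')}=0,\ \alpha'>1,\ \alpha''>1\}$ (a positive-dimensional submanifold, not a single base point as your wording suggests), and you assert that Euler homogeneity, flatness and WDVV then determine all transverse Taylor coefficients iteratively. Neither mechanism delivers this: the Euler field of (iii) has components proportional to $t^{(\alpha'\alpha'')}$ in the mixed directions, so it is \emph{tangent} to that locus and $\mathcal{L}_E F=(3-d)F$ gives no control transverse to it; and differentiating WDVV transversally produces linear relations among the unknown transverse derivatives of $c$ with no argument that they are uniquely solvable. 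There is also no general principle that a Frobenius structure is determined by its multiplication on a point or locus: the reconstruction result you invoke as ``Dubrovin's local reconstruction principle'' \cite{Du} proceeds via isomonodromic deformation and needs semisimplicity, an assumption absent from Proposition \ref{tensor} and violated by the nonsemisimple algebras of Example \ref{ex2.3} that this paper actually uses. The paper's own remark that the tensor-product prepotential is ``only defined implicitly from the original prepotentials'' points to the real state of affairs: existence and the identification rest on the CohFT machinery of \cite{Kaufmann,KMK} (ultimately the structure of $H^{*}(\overline{M}_{0,n})$), not on a Taylor-by-Taylor uniqueness argument. To make your route rigorous you would need either a genuine proof of the unique-extension claim off the locus, or a reduction to the KKM uniqueness statement; as written, the decisive step is asserted rather than proved.
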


\noindent Such products describe the quantum cohomology of a product of varieties,
 and within singularity theory it appears when one takes the
direct sum of singularities.

\subsection{Tensor products with trivial algebras}

We now take the tensor product of a Frobenius manifold $\mathcal{M}$ with a trivial
 manifold $\mathcal{A}$ defined by a Frobenius algebra (Example \ref{trivial}). To
emphasize the different roles played by $\mathcal{M}$ and $\mathcal{A}$ we alter
 the general notation for tensor products as described above. The tensor product
will be written as $\mathcal{M}_{\mathcal{A}}\,,$ (so  $\mathcal{M}_{\mathcal{A}}
= \mathcal{M}\otimes{\mathcal{A}}$). The basis $e_i$ for $\mathcal{A}$ will be retained and the
unity element denoted by $e_1\,.$ Thus notation such as $e=\partial_{1}$ will not be used.
Latin indices will be reserved for $\mathcal{A}$-related objects, and Greek indices will be
reserved for $\mathcal{M}$-related objects. Thus $c_{\alpha\beta}^{~~\gamma}$
will denote the structure functions for the multiplication on $\mathcal{M}$ and $c_{ij}^{~~k}$ will
denote the structure constants for the multiplication on $\mathcal{A}\,.$
Coordinates on $\mathcal{M}_{\mathcal{A}}$ are denoted
\[
\{ t^{(\alpha i)}\,, \alpha = 1 \,,\ldots\,, m=dim \mathcal{M}\,,\quad i=1\,,\ldots\,, n=dim \mathcal{A} \}\,.
\]
No confusion should arise with this notation.

We begin by constructing a lift of a scalar valued function to
an $\mathcal{A}$-valued function and visa-versa.

\begin{de}\label{LiftingDef} Let $f$ be an analytic function on $\mathcal{M}$
(that is, analytic in the flat coordinates for $\mathcal{M}$). The $\mathcal{A}$-valued
function $\hat{f}$ is defined to be:
\[
{\hat{f}} = \left. f\right|_{t^\alpha \mapsto t^{(\alpha i)} e_i}
\]
with $\widehat{fg}=\hat{f}\circ\hat{g}$ and $\hat{1}=e_1\,.$
The evaluation $f^\mathcal{A}$ of $\hat{f}$ is defined by
\[
f^\mathcal{A}=\omega\left(\hat{f}\right)\,,
\]
where $\omega\in\mathcal{A}^\star\,.$

\end{de}

\noindent Since the function is analytic and the algebra $\mathcal{A}$ is
commutative and associative the above construction is well-defined.

\begin{rem} This definition requires the existence of a distinguished
coordinate system on $\mathcal{M}$ in which the function $f$ is
analytic. In the case of analytic Frobenius manifolds one automatically
 has such a distinguished system of coordinates, namely the
flat coordinates of the metric.
\end{rem}

\noindent With these definitions one may construct a new prepotential from the original one.

\begin{thm}\label{MainA} Let $F$ be the prepotential of a Frobenius manifold $\mathcal{M}$
and let $\mathcal{A}$ be a  Frobenius algebra with 1-form $\omega\,.$ The
function
\[
F^\mathcal{A} = \omega\left(\widehat{F}\right)
\]
defines a Frobenius manifold, namely the manifold $\mathcal{M}_\mathcal{A}\,.$

\end{thm}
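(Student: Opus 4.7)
The plan is to verify the three criteria of Proposition \ref{tensor}, which will identify the Frobenius manifold defined by $F^{\mathcal{A}}$ with the tensor product $\mathcal{M} \otimes \mathcal{A}$ and hence prove the theorem. The starting point will be the chain-rule identity $\partial \widehat{f}/\partial t^{(\alpha i)} = \widehat{\partial_\alpha f} \circ e_i$, which holds for any analytic $f$ on $\mathcal{M}$ because $\mathcal{A}$ is commutative and associative. Iterating this three times and using the linearity of $\omega$ will produce the master identity
\[
\frac{\partial^3 F^{\mathcal{A}}}{\partial t^{(\alpha i)}\, \partial t^{(\beta j)}\, \partial t^{(\gamma k)}} = \omega\bigl( \widehat{c_{\alpha\beta\gamma}} \circ e_i \circ e_j \circ e_k \bigr),
\]
from which everything else will flow.

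For condition (i), I would set one of the indices to $(1,1)$ and use that $c_{1\alpha\beta}=\eta_{\alpha\beta}$ is constant while $e_1$ is the algebra identity; this gives $\eta^{\mathcal{A}}_{(\alpha i)(\beta j)} = \eta_{\alpha\beta}\,\eta_{ij}$, matching the tensor product metric, with unit vector $\partial_{(1 1)}$. Raising one $\mathcal{M}$-index then yields the compact rewriting
\[
\sum_k c^{\mathcal{A}\,(\gamma k)}_{(\alpha i)(\beta j)}\, e_k = \widehat{c_{\alpha\beta}{}^{\gamma}} \circ e_i \circ e_j,
\]
and the WDVV associativity of the product on $\mathcal{M}_{\mathcal{A}}$ reduces to the WDVV relation on $\mathcal{M}$, lifted via the algebra-homomorphism property $\widehat{fg} = \widehat{f} \circ \widehat{g}$ combined with commutativity and associativity of $\circ$.

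The main obstacle will be condition (ii): the factorisation of the structure functions on the slice $t^{(\alpha i)} = 0$ for $\alpha > 1$ and $i > 1$. The decisive observation is that the unity axiom $c_{1\alpha\beta}=\eta_{\alpha\beta}$ forces $\partial_1 c_{\alpha\beta\gamma} = 0$, so $c_{\alpha\beta\gamma}$ on $\mathcal{M}$ is independent of $t^1$. On this slice, the $\mathcal{A}$-valued arguments $T^\mu$ with $\mu > 1$ reduce to the scalar multiples $t^{(\mu 1)}\,e_1$ of the identity, so $\widehat{c_{\alpha\beta\gamma}}$ collapses to the scalar value $c_{\alpha\beta\gamma}(t')$ times $e_1$, where $t' = (t^{(\alpha 1)})_\alpha$ is the induced point of $\mathcal{M}$. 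Substituting into the master identity gives $c^{\mathcal{A}\,(\gamma k)}_{(\alpha i)(\beta j)} = c_{\alpha\beta}{}^{\gamma}(t')\, c_{ij}{}^{k}$ on the slice, exactly the factorisation (ii) requires.

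Finally, condition (iii) on the Euler field is a direct verification. Since the trivial algebra $\mathcal{A}$ carries $E_{\mathcal{A}} = \sum_i t^i \partial_i$ with all $q^{\mathcal{A}}_i = 0$, $r^{\mathcal{A}}_i = 0$ and scaling dimension $0$, Proposition \ref{tensor}(iii) specialises to
\[
E^{\mathcal{A}} = \sum_{\alpha, i} (1-q_\alpha)\, t^{(\alpha i)}\, \partial_{(\alpha i)} + \sum_\alpha r_\alpha\, \partial_{(\alpha 1)},
\]
and a routine application of the chain rule to $F^{\mathcal{A}} = \omega(\widehat{F})$ will confirm the quasihomogeneity of $F^{\mathcal{A}}$ with scaling dimension $d^{\mathcal{A}} = d$, completing the identification.
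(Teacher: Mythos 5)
Your proposal is correct and follows essentially the same route as the paper's proof: the iterated chain-rule identity $\partial\widehat{f}/\partial t^{(\alpha i)}=\widehat{\partial_\alpha f}\circ e_i$ giving the master formula for the third derivatives, the metric factorisation $\eta_{\alpha\beta}\eta_{ij}$ via the $(1,1)$ slot, reduction of WDVV for $F^{\mathcal{A}}$ to the lifted WDVV on $\mathcal{M}$ using $\widehat{fg}=\widehat{f}\circ\widehat{g}$ and the Frobenius property of $\omega$, the $t^1$-independence argument for the slice factorisation in condition (ii), and the same Euler vector field. The only (immaterial) difference is framing: since Proposition \ref{tensor} presupposes that the candidate is already a Frobenius manifold, your WDVV and quasi-homogeneity checks are exactly the first half of the paper's proof, with the tensor-product identification coming second.
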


\noindent Note, one could \lq straighten out\rq~ the coordinates $t^{(\alpha i)}$ via the map
$$ v^{i+(\alpha-1) n}=t^{(\alpha i)}\,, \quad 1\leq i \leq n,\quad 1\leq \alpha\leq m,$$
and hence $F^\mathcal{A}=F^\mathcal{A}(v^1,\cdots,v^{mn})$. However such a map is not unique
and the tensor structure
is lost.

\begin{proof} The proof is in two parts: we first show that the
prepotential $F^\mathcal{A}$ defines a Frobenius manifold, and then identify this
with the tensor product $\mathcal{M} \otimes \mathcal{A}\,.$

By construction we have an $nm$-dimensional manifold with coordinates
 $t^{(\alpha i)}\,,\alpha=1\,,\ldots\,,m=dim\mathcal{M}\,,i=1\,,\ldots\,,n=dim\mathcal{A}\,.$ We
begin with two simple results:

\begin{itemize}

\item[$\bullet$] Because $\eta_{ij} = \omega(e_i \circ e_j)$ it follows,
since by definition, $(\eta^{ij})=(\eta_{ij})^{-1}\,,$ that
\[
\omega(e_i \circ e_r) \eta^{rs} \omega(e_s\circ e_j) = \omega(e_i\circ e_j)\,.
\]
More generally, using the properties of the multiplication on $\mathcal{A}\,,$
\begin{equation}\label{propertyA}
\omega(\ldots \circ e_i \circ e_r) \eta^{rs} \omega(e_s\circ e_j\circ \ldots)
 = \omega(\ldots \circ e_i\circ e_j \circ \ldots)\,.
\end{equation}

\item[$\bullet$] The fundamental result that will be used extensively in the rest of the paper is the following:
\begin{equation}\label{propertyB}
\frac{\partial\hat{f}}{\partial t^{(\alpha i)}}
= \widehat{\frac{\partial f}{\partial t^\alpha}} \circ e_i\,.
\end{equation}
We introduce the notation $\hat{f} = [\hat{f}]^p e_p\,,$ so
\[
\frac{\partial\hat{f}}{\partial t^{(\alpha i)}}
= \left[\widehat{\frac{\partial f}{\partial t^\alpha}}\right]^p e_p \circ e_i\,.
\]
This will be used to separate out the $\mathcal{A}$-valued part of various expressions.

\end{itemize}

\noindent With these,
\[
\frac{\partial^3 \hat{F}}{\partial t^{(\alpha i)} \partial t^{(\beta j)} \partial t^{(\gamma k)}}
 = \widehat{\left( \frac{\partial^3 F}{\partial t^\alpha \partial t^\beta \partial t^\gamma}\right)}
\circ e_i \circ e_j \circ e_k\,,
\]
so
\begin{eqnarray*}
\frac{\partial^3 F^\mathcal{A}}{\partial t^{(\alpha i)} \partial t^{(\beta j)} \partial t^{(\gamma k)}}
& = & \omega\left( \widehat{c_{\alpha\beta\gamma}}
\circ e_i \circ e_j \circ e_k\right)\,,\\
& = & \left[ \widehat{c_{\alpha\beta\gamma}} \right]^p \omega(e_p\circ e_i \circ e_j \circ e_k)\,,\\
& = & c_{(\alpha i)(\beta j)(\gamma k)}\,.
\end{eqnarray*}

\noindent{\underline{Normalization}}

We define $\eta_{(\alpha i)(\beta j)}$ by
\begin{eqnarray*}
\eta_{(\alpha i)(\beta j)} & = & c_{(11)(\alpha i)(\beta j)}\,,\\
& = & \omega\left( \widehat{c_{1\alpha\beta}}
\circ e_1 \circ e_i \circ e_j\right)\,,\\
& = & \eta_{\alpha\beta}\, \eta_{ij}
\end{eqnarray*}
since $\widehat{c_{1\alpha\beta}} = \widehat{\eta_{\alpha\beta}}
= \eta_{\alpha\beta} e_1\,,$ and $e_1$ is the unity for the multiplication on $\mathcal{A}\,.$

This is non-degenerate (since by assumption $\eta_{\alpha\beta}$
and $\eta_{ij}$ are non-degenerate) and this will be taken to be
 the metric and used to raise and lower indices.
In particular,
$\eta^{(\alpha i)(\beta j)}=\eta^{\alpha\beta}\, \eta^{ij}\,.$

\noindent{\underline{Associativity}}

Using the metric to raise an index one obtains
\begin{equation}\label{liftedmultiplication}
c_{(\alpha i)(\beta j)}^{\phantom{{(\alpha i)(\beta j)}}(\gamma k)}
= \left[ \widehat{c_{\alpha\beta}^{~~\gamma}} \right]^p \, c_{ij}^{~~q} c_{pq}^{~~k}
\end{equation}
and this defines a multiplication on $\mathcal{M}_{\mathcal{A}}\,.$
The structure of this multiplication may be made more transparent if one writes the
basis for $T\mathcal{M}_{\mathcal{A}}$ as a tensor product:
\[
\frac{\partial~}{\partial t^{(\alpha i)}} = \partial_\alpha \otimes e_i\,.
\]
With this, the multiplication may be written as:
\[
\left( \partial_\alpha \otimes e_i\right) \circ \left( \partial_\beta \otimes e_j\right)
= \left[ \widehat{ \partial_\alpha \circ \partial_\beta }\right]^p \otimes e_p \circ e_i\circ e_j\,,
\]
where $ \widehat{f\partial_\alpha}=[\hat{f}]^p \partial_\alpha\otimes e_p\,,$ and
hence $\left[\widehat{f\partial_\alpha}\right]^p=[\hat{f}]^p \partial_\alpha\,.$
By construction this multiplication defines a commutative multiplication with unity
$e=\dfrac{\p~}{\partial t^{(1 1)}}=\partial_1\otimes e_1\,.$

To prove associativity we first rewrite the equation that has to be satisfied by
$F^\mathcal{A}$, namely the WDVV equation:
\[
\dfrac{\p^3F^\mathcal{A}}{\p t^{(\gamma k)}\,\p t^{(\sigma s)}\, \p t^{(\alpha i)}} {\eta}^{(\alpha i)(\beta j)}
\dfrac{\p^3F^\mathcal{A}}{\p t^{(\beta j)}\,\p t^{(\delta p)}\, \p t^{(\mu q)}}=
\dfrac{\p^3F^\mathcal{A}}{\p t^{(\mu q)}\,\p t^{(\sigma s)}\, \p t^{(\alpha i)}} {\eta}^{(\alpha i)(\beta j)}
\dfrac{\p^3F^\mathcal{A}}{\p t^{(\beta j)}\,\p t^{(\delta p)}\, \p t^{(\gamma k)}}\,.
\]
This is equivalent to
\eqa &&
\left[\widehat{c_{\gamma \sigma \alpha}}\right]^a  \omega(e_a\circ e_k\circ e_s\circ e_i)
\eta^{\alpha\beta} \eta^{ij}   \omega(e_j\circ e_p\circ e_q\circ e_b)
\left[\widehat{c_{\beta\delta\mu}}\right]^b\nn\\
&=&
\left[\widehat{c_{ \mu \sigma \alpha}}\right]^a  \omega(e_a\circ e_q\circ e_s\circ e_i) \eta^{\alpha\beta} \eta^{ij}
  \omega(e_j\circ e_p\circ e_k\circ e_b)
\left[\widehat{c_{\beta\delta\gamma}}\right]^b\,,
\nn\eeqa
which becomes, on using equation \eqref{propertyA},
\eqa &&
\left[\widehat{c_{\gamma \sigma \alpha}}\right]^a \eta^{\alpha\beta}
 \omega(e_a\circ e_k\circ e_s\circ e_p\circ e_q\circ e_b)
\left[\widehat{c_{\beta\delta\mu}}\right]^b\nn\\
&=&
\left[\widehat{c_{ \mu \sigma \alpha}}\right]^a \eta^{\alpha\beta} \omega(e_a\circ e_q\circ
e_s\circ e_p\circ e_k\circ e_b)
\left[\widehat{c_{\beta\delta\gamma}}\right]^b\,.
\label{IZ-2} \eeqa

Since the prepotential $F$ for the Frobenius manifold $\mathcal{M}$ satisfies the WDVV equation
$$ \dfrac{\p^3 F}{\p t^\gamma\,\p t^\sigma\, \p t^\alpha} {\eta}^{\alpha\beta}
\dfrac{\p^3 {F}}{\p t^\beta\,\p t^\delta\, \p t^\mu}=
\dfrac{\p^3 {F}}{\p t^\mu \,\p t^\sigma\, \p t^\alpha} {\eta}^{\alpha\beta}
\dfrac{\p^3 {F}}{\p t^\beta\,\p t^\delta\, \p t^\gamma}\,,
$$
it follows that
$$ \widehat{\dfrac{\p^3 F}{\p t^\gamma\,\p t^\sigma\, \p t^\alpha}} \circ
\widehat{ {\eta}^{\alpha\beta}} \circ \widehat{
\dfrac{\p^3 {F}}{\p t^\beta\,\p t^\delta\, \p t^\mu}}=
\widehat{\dfrac{\p^3 {F}}{\p t^\mu \,\p t^\sigma\, \p t^\alpha}} \circ
\widehat{ {\eta}^{\alpha\beta}} \circ
\widehat{\dfrac{\p^3 {F}}{\p t^\beta\,\p t^\delta\, \p t^\gamma}},$$
where $\widehat{ {\eta}^{\alpha\beta}}={\eta}^{\alpha\beta}\, e_1$.
This reduces to
\eqa
\left[\widehat{c_{\gamma \sigma \alpha}}\right]^a \eta^{\alpha\beta}
 e_a\circ e_b
\left[\widehat{c_{\beta\delta\mu}}\right]^b
=
\left[\widehat{c_{ \mu \sigma \alpha}}\right] \eta^{\alpha\beta} e_a \circ e_b
\left[\widehat{c_{\beta\delta\gamma}}\right]^b\,.\eeqa
Thus we have, by multiplying by $e_q\circ e_s\circ e_p\circ e_k$\,,
\eqa
\left[\widehat{c_{\gamma \sigma \alpha}}\right]^a \eta^{\alpha\beta}
e_a\circ e_k\circ e_s\circ e_p\circ e_q\circ e_b
\left[\widehat{c_{\beta\delta\mu}}\right]^b =
\left[\widehat{c_{ \mu \sigma \alpha}}\right]^a \eta^{\alpha\beta}
e_a\circ e_q\circ e_s\circ e_p\circ e_k\circ e_b
\left[\widehat{c_{\beta\delta\gamma}}\right]^b,\nn
\nn\eeqa
and evaluating the function with $\omega\,,$ gives the identity \eqref{IZ-2}. Hence $F^\mathcal{A}$
satisfies the WDVV equation
in the flat coordinates of the metric $\eta_{(\alpha i)(\beta j)}\,.$

\noindent{\underline{Quasi-homogeneity}}

This follows immediately from the definition of $F^\mathcal{A}$, but one can also derive the result
by direct computation. The quasi-homogeneity of $F$
is expressed by the equation
\[
\sum_\alpha \left[(1-q_\alpha) t^\alpha + r_\alpha \right] \frac{\partial F}{\partial t^\alpha} = (3-d) F
\]
where quadratic terms will be ignored. On lifting this and using the evaluation map defined by $\omega$ one obtains
\[
\sum_{(\alpha i)} (1-q_\alpha) t^{(\alpha i)} \omega
\left(\widehat{\left(\frac{\partial F}{\partial t^\alpha}\right)} \circ e_i\right) + \sum_\alpha r_\alpha
\omega\left( \widehat{\frac{\partial F}{\partial t^\alpha}} \right) = (3-d) F^\mathcal{A}\,.
\]
Using \eqref{propertyB} yields the result ${E}^\mathcal{A}\left(F^\mathcal{A}\right) = (3-d) F^\mathcal{A}$
(again, up to quadratic terms) where
\[
E^\mathcal{A} = \sum_{(\alpha i)} (1-q_\alpha) t^{(\alpha i)} \frac{\partial~}{\partial t^{(\alpha i)}}
+ \sum_\alpha r_\alpha \frac{\partial~}{\partial t^{(\alpha 1)}}\,.
\]

\noindent These show that $F^\mathcal{A}$ defines a Frobenius manifold. It remains to show that this
is the tensor product $\mathcal{M} \otimes \mathcal{A}\,.$
In fact this is straightforward. Parts (i) and (iii) of Proposition \ref{tensor} are immediate from
above (since for the trivial Frobenius manifold $\mathcal{A}$, $q_i=r_i=d=0$),
so it just remains to verify condition (ii). Since $c_{\alpha\beta}^{\phantom{\alpha\beta}\gamma}$
is independent of $t^1$ it follows that at points $t^{(\alpha i)}=0\,,\alpha>1\,,i>1$
that $\widehat{c_{\alpha\beta}^{\phantom{\alpha\beta}\gamma}}=c_{\alpha\beta}^{\phantom{\alpha\beta}\gamma}
\left(t^{(\sigma1)}\right) e_1$ and the result follows
from equation \eqref{liftedmultiplication}.

Hence the prepotential $F^\mathcal{A}=\omega(\widehat{F})$ defines the Frobenius manifold structure
on the tensor product $\mathcal{M}_\mathcal{A} = \mathcal{M}\otimes \mathcal{A}\,.$ If the multiplications on $\mathcal{M}$ and $\mathcal{A}$ are semisimple then the multiplication on $\mathcal{M}_\mathcal{A}$ is also
semisimple \cite{Kaufmann,KMK}.\end{proof}

\begin{rem} Note the existence of such a prepotential $F^\mathcal{A}$ for such a tensor product follows
 from the original work of Kaufmann, Kontsevich and Manin. However
the explicit form for such an $F^\mathcal{A}$ is not immediate from their construction. The above
result gives an explicit and easily computable prepotential in the case
when one of the manifolds is trivial.
\end{rem}

\begin{ex}Let $\mathcal{M}$ be a one-dimensional Frobenius manifold
$$F(t^1)=\dfrac{1}{6}(t^1)^3,\, e=\p_1, \, E=t^1\p_1,$$
so $\mathcal{M}_{\mathcal{A}}=\mathcal{A}$
given in Example \ref{trivial}.
\end{ex}

\begin{ex} Suppose $\mathcal{A}$ is a Frobenius algebra $\mathcal{Z}_{2,2}^{\var, 0}$
defined in Example \ref{ex2.2}.  When $\var \ne 0$, $\mathcal{A}$ is semisimple.
When $\var=0$, $\mathcal{A}$ is nonsemisimple and exactly the algebra $\mathcal{Z}_{2,2}$
given in Example \ref{ex2.3}. Let $\mathcal{M}$ be a 2-dimensional Frobenius manifold with the flat
coordinate $(t^1,t^2)$. We denote
$$\widehat{t^1}=v^1e_1+v^2 e_2,\quad \widehat{t^2}= v^3e_1+v^4 e_2.$$

{\bf Case 1}. $\mathcal{M}=\mathbb{C}^2/W(A_2)$,  i.e.,
\beq F(t)=\frac{1}{2}(t^1)^2t^2-\frac{1}{72}(t^2)^4, \quad e=\frac{\p}{\p t^1},\quad
E=t^1 \frac{\p}{\p t^1}+ \frac{2}{3}t^2 \frac{\p}{\p t^2}.\nn \eeq
The unity vector field and the Euler vector field of $\mathcal{M}_\mathcal{A}$ are given by, respectively,
$$e=\frac{\p}{\p v^1}, \quad E^\mathcal{A}=v^1\frac{\p}{\p v^1}+v^2\frac{\p}{\p v^2}
+\frac{2}{3}v^3\frac{\p}{\p v^3}+\frac{2}{3}v^4\frac{\p}{\p v^4}$$
and the potential of $\mathcal{M}_\mathcal{A}$ is given by
\beq F^\mathcal{A}(v)=\frac{1}{2}(v^1)^2v^4+v^1v^2v^3-\frac{1}{18}(v^3)^3v^4
+\var\left(\frac{1}{2}(v^2)^2v^4-\frac{1}{18}v^3(v^4)^3\right).
\nn\eeq
We remark that when $\var\ne0$, $\mathcal{M}_\mathcal{A}$ is a polynomial semisimple
Frobenius manifold. By a result of Hertling \cite{Hert}, the manifold $\mathcal{M}_\mathcal{A}$ decomposes into a product of
$A_2$-Frobenius manifolds. The algebra $\mathcal{A}$ can be seen as controlling this decomposition.

\medskip

{\bf Case 2}. $\mathcal{M}=\mathrm{QH^*(\mathrm{CP}^1)}$, i.e.,
$$F(t)=\frac{1}{2}(t^1)^2t^2+e^{t^2}, \quad e=\dfrac{\p}{\p t^1},
\quad E=t^1\dfrac{\p}{\p t^1}+2\dfrac{\p}{\p t^2}.$$
The unity vector field and the Euler vector field of $\mathcal{M}_\mathcal{A}$ are given by, respectively,
$$e=\dfrac{\p}{\p v^1},\quad E^\mathcal{A}=v^1\dfrac{\p}{\p v^1}+v^2\dfrac{\p}{\p v^2}
+2\dfrac{\p}{\p v^3}+2\dfrac{\p}{\p v^4}$$
and the potential of $\mathcal{M}_\mathcal{A}$ is given by
\beq F^\mathcal{A}(v)=\left\{\begin{array}{ll}
 \dfrac{1}{2}(v^1)^2v^4+v^1v^2v^3+\var (v^2)^2v^4+
\dfrac{\sinh(\sqrt{\var} v^4)}{\sqrt{\var}}\,e^{v^3},&\var\ne 0,\\
\dfrac{1}{2}(v^1)^2v^4+v^1v^2v^3+v^4\,e^{v^3}, &\var=0.\end{array}\right.\nn\eeq
\end{ex}


\section{$\mathcal{A}$-valued Topological Quantum Field Theories}\label{ATQFT}

The ideas developed in the last section may be applied to the construction of $\mathcal{A}$-valued
Topological Quantum Fields Theories on a suitably defined big-phase space (i.e. with gravitational descendent fields).
In fact one could have started with this larger construction
and obtained the results of the last section by restriction to the small-phase space. Conversely, the reconstruction theorems
which give big-phase space structures from Frobenius manifold structures could be used to construct these $\mathcal{A}$-valued TQFTs
from the Frobenius manifold $\mathcal{M}_\mathcal{A}\,.$

\subsection{Background}

A topological quantum field theory (or TQFT) is defined in terms of properties of certain correlators which are themselves
defined in terms of prepotential $\mathcal{F}_{g\geq 0}$\,. For example, consider a smooth projective variety $V$ with $H^{\rm
odd}(V;\mathbb{C})=0$, $\{\gamma_1\,,\ldots\,,\gamma_N\}$ a basis
for the cohomology ring $M:=H^{*}(V;\mathbb{C})$ and let
$$
\eta_{\alpha\beta}  =  \eta(\gamma_\alpha,\gamma_\beta) =  \int_V
\gamma_\alpha \cup \gamma_\beta
$$
be the Poincar\'e pairing which defines a non-degenerate metric
which may be used to raise and lower indices. Following the
conventions of Liu and Tian \cite{liu1,liu2}, a flat coordinate
system $\{t^\alpha_0\,, \alpha=1\,,\ldots\,,N\}$ may be found on
$M$ so $\gamma_\alpha=\frac{\partial~}{\partial t^\alpha_0}$, and in
which the components of $\eta$ are constant.

The big phase space consists of an infinite number of copies of
the $M\,,$ the small phase space, so
\[
M^\infty = \prod_{n\geq 0} H^{*}(V;\mathbb{C})\,.
\]
The coordinate system $\{ t_{0}^{\alpha}\}$ induces, in a canonical way,
a coordinate system $\{t^\alpha_n\,,
n \in \mathbb{Z}_{\geq 0}\,,\alpha=1\,,\ldots\,,N\}$ on $M^{\infty}.$
We denote by $\tau_n(\gamma_\alpha) = \frac{\partial~}{\partial
t^\alpha_n}$ (also abbreviated to $\tau_{n,\alpha}\,)$
the associated fundamental vector fields, which
represent various tautological line bundles over the
moduli space of curves.

The descendant Gromov-Witten invariants
\[
\langle \tau_{n_1}(\gamma_{a_1}) \ldots \tau_{n_k}(\gamma_{a_k})\rangle_g
\]
may be combined into generating functions, called prepotentials, labeled by the genus $g\,,$
\[
{\mathcal{F}}_g=\sum_{k\geq 0} \frac{1}{k!} \sum_{n_1 ,\alpha_1\ldots
n_k,\alpha_k} t^{\alpha_1}_{n_1} \ldots t^{\alpha_k}_{n_k} \langle
\tau_{n_1}(\gamma_{\alpha_1}) \ldots
\tau_{n_k}(\gamma_{\alpha_k})\rangle_g\,,
\]
and these in turn may be used to define $k$-tensor fields on the big phase space, via the formula
\begin{equation}\label{k-point}
\langle\langle {\mathcal W}_{1}\cdots {\mathcal W}_{k}\rangle\rangle_{g} = \sum_{m_{1},\alpha_{1},\cdots , m_{k},\alpha_{k}} f^{1}_{m_{1},\alpha_{1}}
\cdots f^{k}_{m_{k},\alpha_{k}} \frac{\partial^{k} {\mathcal{F}_{g}}}{\partial t^{\alpha_{1}}_{m_{1}}\cdots \partial t^{\alpha_{k}}_{m_{k}} },
\end{equation}
for any vector fields ${\mathcal W}_{i} = \sum_{m,\alpha} f^{i}_{m,\alpha} \frac{\partial}{\partial t^{\alpha}_{m}}$. The tensor field (\ref{k-point}) has a physical interpretation
as the $k$-point correlation function of the TQFT.

The basic relationships between these correlators may then be encapsulated in the following:

\begin{de}
Let $\tilde{t}^\alpha_n=t^\alpha_n - \delta_{n,1} \delta_{\alpha,1}$ and let
\begin{eqnarray*}
\mathcal{S} & = & -\sum_{n,\alpha} \tilde{t}^\alpha_n \tau_{n-1}(\gamma_{\alpha} )\,,\\
\mathcal{D} & = & -\sum_{n,\alpha} \tilde{t}^\alpha_n
\tau_{n}(\gamma_{\alpha})
\end{eqnarray*}
be the string and dilaton vector fields respectively. Then the prepotentials ${\mathcal{F}}_g$ satisfy the following relations:

\medskip
\noindent{\underline{String Equation:}}
\[
\langle\langle \mathcal{S} \rangle\rangle_g = \frac{1}{2} \delta_{g,0} \sum_{\alpha,\beta} \eta_{\alpha\beta} t^\alpha_0 t^\beta_0\,;
\]
\medskip
\noindent{\underline{Dilaton Equation:}}
\[
\langle\langle \mathcal{D} \rangle\rangle_g = (2g-2) {\mathcal{F}}_g - \frac{1}{24} \chi(V) \delta_{g,1}\,;
\]
\medskip
\noindent{\underline{Genus-zero Topological Recursion Relation:}}
\[
\langle\langle\tau_{m+1}(\gamma_\alpha) \tau_n(\gamma_\beta) \tau_k(\gamma_\sigma)  \ra\ra_{{}_0} = \la\la \tau_{m}(\gamma_\alpha)\gamma_\mu \ra\ra_{{}_0}  \la\la\gamma^\mu \tau_n(\gamma_\beta) \tau_k(\gamma_\sigma) \rangle\rangle_{{}_0}\,.
\]
\end{de}
By restricting such theories to primary vector fields with coefficients in the small phase space one recovers a Frobenius manifold structure \cite{Du,Du2} on the small phase space,
with
\[
F_0(t_0^1\,,\ldots\,,t_0^N) = \left.\mathcal{F}_0( {\bf t})
\right|_{t^\alpha_n=0\,,\,n>0}
\]
becoming the prepotential for the Frobenius manifold
and multiplication given by
$$
\tau_{0,\alpha}\circ\tau_{0,\beta} = \langle\langle\tau_{0,\alpha}\tau_{0,\beta}
\gamma^{\sigma}\rangle\rangle_{{}_0}\vert_{M} \gamma_{\sigma}.
$$

\subsection{$\mathcal{A}$-TQFT}

Given such a theory one may extend the previous construction to obtain a new TQFT. Again, the existence of such a result
follows from various reconstruction theorems, but explicit formulae may be obtained when one tensors by a constant Frobenius
algebra.

\begin{thm}\label{MthmB} Let $\mathcal{F}_{g\geq 0}$ be the prepotentials defining a TQFT, $\mathcal{S}$ and $\mathcal{D}$ the corresponding
String and Dilaton vector fields and $\mathcal{A}$ be a trivial Frobenius algebra. Let $f$ be an analytic function on $\mathcal{M}^\infty$
(that is, analytic in the flat coordinates $t_N^\alpha$ for $\mathcal{M}^\infty$) and define the $\mathcal{A}$-valued
function $\hat{f}$ to be:
\begin{equation}\label{biglift}
{\hat{f}} = \left. f\right|_{t^\alpha_N \mapsto t^{(\alpha i)}_N e_i}\,,\qquad N\in\mathbb{Z}_{\geq 0}\,,\quad\alpha=1\,,\ldots \,, m,\quad i=1\,,\ldots\,,n\,.
\end{equation}
Then the functions
\[
\mathcal{F}^{\mathcal{A}}_{g\geq 0} = \omega\left( {\widehat{ \mathcal{F}}_{g\geq 0} }\right)
\]
and vector fields
\begin{eqnarray*}
\mathcal{S}^\mathcal{A} & = & -\sum_{N,(\alpha i)} \tilde{t}^{(\alpha,i)}_N \tau_{N-1,(\alpha i)}\,,\\
\mathcal{D}^\mathcal{A} & = & -\sum_{N,(\alpha i)} \tilde{t}^{(\alpha,i)}_N \tau_{N,(\alpha i)}
\end{eqnarray*}
satisfy the axioms of a Topological Quantum Field Theory.

\end{thm}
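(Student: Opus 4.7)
The plan is to propagate each of the three TQFT axioms (String, Dilaton, and genus-zero TRR) from $\mathcal{F}_g$ to $\mathcal{F}^{\mathcal{A}}_g$ using exactly the lifting-and-trace machinery developed for Theorem~\ref{MainA}. The central technical input is the big-phase-space analogue of the identity \eqref{propertyB}, namely
$$
\frac{\partial \hat{f}}{\partial t^{(\alpha i)}_N} = \widehat{\frac{\partial f}{\partial t^\alpha_N}} \circ e_i,
$$
which follows at once from the definition \eqref{biglift} since the lift substitutes each flat coordinate $t^\alpha_N$ $\mathcal{A}$-linearly by $t^{(\alpha i)}_N e_i$. Iterating this identity expresses every mixed partial of $\widehat{\mathcal{F}_g}$ as the lift of the corresponding $k$-point function of $\mathcal{F}_g$ composed with the product $e_{i_1}\circ\cdots\circ e_{i_k}$ of basis vectors, and evaluation under $\omega$ then recovers the correlators of $\mathcal{F}^{\mathcal{A}}_g$.

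A short preliminary check is that the shifted time in the $\mathcal{A}$-valued theory, $\tilde{t}^{(\alpha i)}_N = t^{(\alpha i)}_N - \delta_{N,1}\delta_{(\alpha i),(1,1)}$, reassembles correctly as $\widehat{\tilde{t}^\alpha_N} = \tilde{t}^{(\alpha i)}_N e_i$, because the $(1,1)$-component is the only one shifted on both sides. Given this, applying $\omega\circ\widehat{\,\cdot\,}$ to the String and Dilaton identities for $\mathcal{F}_g$ immediately yields the corresponding identities for $\mathcal{F}^{\mathcal{A}}_g$: the left-hand sides reassemble, by the iterated form of \eqref{propertyB}, as $\langle\langle \mathcal{S}^{\mathcal{A}}\rangle\rangle^{\mathcal{A}}_g$ and $\langle\langle \mathcal{D}^{\mathcal{A}}\rangle\rangle^{\mathcal{A}}_g$, while the right-hand sides pick up the tensor-product metric $\eta_{(\alpha i)(\beta j)} = \eta_{\alpha\beta}\eta_{ij}$ (exactly as in Theorem~\ref{MainA}) together with a rescaled genus-one constant $\chi^{\mathcal{A}} := \omega(e_1)\,\chi(V)$ in the Dilaton equation.

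The genus-zero topological recursion relation is the one place where the calculation needs real work, but it is the direct analogue of the WDVV step inside the proof of Theorem~\ref{MainA}. Writing both sides of the $\mathcal{A}$-valued TRR via the iterated form of \eqref{propertyB}, the right-hand side appears as a product of two $\omega$-evaluations contracted by the inverse metric $\eta^{(\mu p)(\nu q)} = \eta^{\mu\nu}\eta^{pq}$; the algebra-index contraction collapses the two pairings into one by \eqref{propertyA}, after which the original TRR for $\mathcal{F}_0$ — lifted through the multiplicativity of $\widehat{\,\cdot\,}$ — rewrites the result as the three-point function on the left. The main obstacle I anticipate is purely organisational, namely identifying the correct replacement of $\chi(V)$ in the Dilaton equation; the computation itself forces the canonical choice above, after which all three axioms reduce to mechanical applications of the two master identities \eqref{propertyA} and \eqref{propertyB}, in parallel with the small-phase-space proof.
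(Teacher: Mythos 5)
Your proposal is correct and follows essentially the same route as the paper's proof: the big-phase-space version of \eqref{propertyB} (iterated, then evaluated with $\omega$) handles all three axioms, with \eqref{propertyA} collapsing the metric contraction in the genus-zero recursion relation and the Dilaton equation forcing $\chi^{\mathcal{A}}(V)=\chi(V)\,\omega(e_1)$, exactly as in the paper. Your preliminary check that $\widehat{\tilde{t}^\alpha_N}=\tilde{t}^{(\alpha i)}_N e_i$ is a detail the paper leaves implicit, but the argument is otherwise the same.
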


\begin{proof}

\noindent{\underline{Genus-zero Topological Recursion Relation}}

By repeating the construction in Theorem \ref{MainA} (essentially using  \eqref{propertyB}) one easily obtains the equation
\[
\langle\langle \tau_{M+1,(\alpha i)} \tau_{N,(\beta j)} \tau_{K,(\sigma k)}\rangle\rangle_{{}_0} = \omega \left(
\langle\langle \tau_{M+1,\alpha} \tau_{N,\beta} \tau_{K,\sigma}\rangle\rangle_{{}_0}^{\hat{}}  \circ e_i \circ e_j \circ e_k \right)
\]
(where we displace the $\hat{}$ symbol for notational convenience, so $f\,{}^{\hat{}}=\hat{f}$). On using the topological recursion relation this decomposes as
\[
\langle\langle \tau_{M+1,(\alpha i)} \tau_{N,(\beta j)} \tau_{K,(\sigma k)}\rangle\rangle_{{}_0} =
\eta^{\mu\nu} \omega \left(
\langle\langle \tau_{M,\alpha} \gamma_\mu \rangle\rangle_{{}_0}^{\hat{}} \circ e_i \circ e_j \circ
\langle\langle \gamma_\mu \tau_{N,\beta} \tau_{K,\sigma} \rangle\rangle_{{}_0}^{\hat{}}\circ e_k
\right)\,
\]
\[
=
\eta^{\mu\nu} \omega \left(
\langle\langle \tau_{M,\alpha} \gamma_\mu \rangle\rangle_{{}_0}^{\hat{}} \circ e_i \circ e_r\right) \eta^{rs}
\omega \left(
e_s \circ e_j \circ \langle\langle \gamma_\mu \tau_{N,\beta} \tau_{K,\sigma} \rangle\rangle_{{}_0}^{\hat{}}\circ e_k \right)
\]
on using  \eqref{propertyA}. Since
\begin{eqnarray*}
\langle\langle \tau_{M,(\alpha i)} \gamma_{(\mu r)} \rangle\rangle_0 & = & \omega\left(\langle\langle \tau_{M,\alpha} \gamma_\mu \rangle\rangle_{{}_0}^{\hat{}} \circ e_i \circ e_r\right)\,,\\
\langle\langle \gamma_{(\mu s)} \tau_{N,(\beta j)} \tau_{K,(\sigma k)} \rangle\rangle_{{}_0} & = &
\omega \left(
e_s \circ \langle\langle \gamma_\mu \tau_{N,\beta} \tau_{K,\sigma} \rangle\rangle_{{}_0}^{\hat{}}\circ e_s \circ e_j \circ e_k \right),
\end{eqnarray*}
the result follows.

\medskip

\noindent{\underline{String Equation}}

Again, on using  \eqref{propertyB} it follows that
\begin{eqnarray*}
\langle\langle \mathcal{S}^\mathcal{A} \rangle\rangle_g & = &  - \sum_{M,(\alpha i)} {\tilde{t}}^{(\alpha,i)}_M
\omega\left[
\widehat{\frac{\partial \mathcal{F}_g}{\partial t^\alpha_{M-1}}} \circ e_i\right]\,,\\
& = & \omega\left( \langle\langle \mathcal{S} \rangle\rangle^{\hat{}}_g \right)\,.
\end{eqnarray*}
Since $\mathcal{S}$ satisfies the string equation,
\begin{eqnarray*}
\langle\langle \mathcal{S}^\mathcal{A} \rangle\rangle_g & = & \frac{1}{2} \delta_{g,0} \omega \left[ \sum_{\alpha,\beta} \hat{t^\alpha_0} \circ \hat{t^\beta_0} \right]\,,\\
&=&\frac{1}{2} \delta_{g,0} \sum_{(\alpha,i),(\beta,j)} \eta_{(\alpha i)(\beta j)} t^{(\alpha i)}_0 t^{(\beta j)}_0\,,
\end{eqnarray*}
using the definition of the lifting map and the fundamental property $\omega(e_i \circ e_j)=\eta_{ij}\,.$

\medskip

\noindent{\underline{Dilaton Equation}}

Similarly, since $\mathcal{D}$ satisfies the Dilaton equation,
\begin{eqnarray*}
\langle\langle \mathcal{D}^\mathcal{A} \rangle\rangle_g & = & \omega \left( \langle\langle \mathcal{D} \rangle\rangle_g^{\hat{}} \right)\,,\\
& = & (2g-2) \omega(\hat{\mathcal{F}}_g) - \frac{1}{24} \delta_{g,1} \chi(V) \omega(e_1)\,,\\
& = & (2g-2) \mathcal{F}^\mathcal{A}_g  - \frac{1}{24} \delta_{g,1} \chi^\mathcal{A}(V),
\end{eqnarray*}
where $\chi^\mathcal{A}(V)=\chi(V) \omega(e_1)\,.$ \end{proof}

\begin{rem} The above axioms do not include the big-phase space counterpart to the Euler vector field,
but the same ideas may be applied
if such a field exists on the original TQFT.\end{rem}

The individual prepotentials may be combined into a single $\tau$-function
\[
\tau(t^\alpha_N) = e^{\sum \hbar^{g-1} \mathcal{F}_g}.
\]
In the simplest case, when $dim \mathcal{M}=1$ this defines a specific solution of the KdV hierarchy. The full connection between such
$\tau$-functions and corresponding integrable hierarchies remains an important open problem.

Since each prepotential $\mathcal{F}_g$ lifts to prepotentials $\mathcal{F}^\mathcal{A}_g$ one may define
a corresponding $\tau$-function
\[
\tau^\mathcal{A}(t^\alpha_N) = e^{\sum \hbar^{g-1} \mathcal{F}^\mathcal{A}_g}
\]
and it is clear that $\tau^\mathcal{A} = \omega\left[\hat{\tau}\right]\,.$ It seems natural to conjecture that such
a function should define a solution to a corresponding $\mathcal{A}$-valued dispersive integrable hierarchy. However, this first
requires the development of a theory of such $\mathcal{A}$-valued hierarchies.

\subsection{The role of the Frobenius form $\omega$}

The Frobenius form $\omega$ plays a vital role in the above constructions; without it one only has $\mathcal{A}$-valued objects.
However, one can dispense with it and deal directly with such $\mathcal{A}$-valued objects and derive relations satisfied by them.
For example, using the lifting map \eqref{biglift}, one can define $\mathcal{A}$-valued \lq correlators\rq~:
\begin{eqnarray*}
\langle\langle \tau_{N,(\alpha i)} \ldots \tau_{M,(\beta j)} \rangle\rangle_g^{\mathcal A} & = & \left[
\frac{\partial~}{\partial t^\alpha_N} \ldots \frac{\partial~}{\partial t^\beta_M} \mathcal{F}_g\right]^{\hat{}} \circ e_i \circ \ldots \circ e_j\,,\\
&=& \langle\langle \tau_{N,\alpha } \ldots \tau_{M,\beta } \rangle\rangle_g^{\hat{}}\circ e_i \circ \ldots \circ e_j\,.
\end{eqnarray*}
It is straightforward to derive the following recursion relation:
\[
\Omega \circ \langle\langle \tau_{M+1, (\alpha i)} \tau_{N, (\beta j)} \tau_{K, (\sigma k)} \rangle\rangle^{\mathcal{A}}_{{}_0} \\
= \eta^{(\mu r)(\gamma s)}
\langle\langle \tau_{M,(\alpha i)} \tau_{0,(\mu r)} \rangle\rangle^\mathcal{A}_{{}_0} \circ \langle\langle \tau_{0, (\gamma s)} \tau_{N, (\beta j)} \tau_{K, (\sigma k)} \rangle\rangle^{\mathcal{A}}_{{}_0}\,,
\]
where $\Omega = \eta^{rs} e_r \circ e_s\,.$ If this element is invertible, then one can obtain a bona fide $\mathcal{A}$-valued recursion relation.
We will not further develop such a theory here.


\section{$\mathcal{A}$-valued dispersionless integrable systems}\label{dispersionlesssection}

It was shown by Dubrovin that, given a Frobenius manifold $\mathcal{M}$, one can construct
 an associated bi-Hamiltonian hierarchy of hydrodynamic type, known as the principal hierarchy, with the
geometry of the manifold encoding the various components required in its construction.
This hierarchy may be written as
\begin{equation}
\begin{array}{lcr}
\displaystyle{\frac{\partial t^\alpha}{\partial T^{(N,\sigma)}}} &
 = & \displaystyle{\mathcal{P}_1^{\alpha\beta} \, \frac{\partial h_{(N,\sigma)}}{\partial t^\beta}}\,,\\
&&\\
& = &\displaystyle{\mathcal{P}_2^{\alpha\beta} \, \frac{\partial h_{(N-1,\sigma)}}{\partial t^\beta}}
\end{array}\label{principal}
\end{equation}
with (compatible) Hamiltonian operators
\[
\mathcal{P}^{\alpha\beta}_1=\eta^{\alpha\beta} \frac{d~}{dX}\,, \qquad \mathcal{P}^{\alpha\beta}_2
= \, g^{\alpha\beta} \frac{d~}{dX} + \Gamma^{\alpha\beta}_\gamma t^\gamma_X\,,
\]
where $g^{\alpha\beta}=c^{\alpha\beta}_{\phantom{\alpha\beta}\gamma} E^\gamma$ is the intersection
 form on $\mathcal{M}$ (and $\Gamma^{\alpha\beta}_\gamma=-g^{\alpha\mu} \Gamma^\beta_{\mu\gamma}$).
The Hamiltonian densities $h_{(N,\sigma)}$ come from the coefficients in the expansion of the
deformed flat connection for the Dubrovin connection,
\[
t_\alpha(\lambda) = \sum_{N=0}^\infty h_{(N,\alpha)} \lambda^N\,,\qquad h_{(0,\alpha)}
= \eta_{\alpha\beta} t^\beta\,,
\]
and these satisfy the recursion relation
\begin{equation}\label{recursion}
\frac{\partial^2 h_{(N,\sigma)}}{\partial t^\alpha \partial t^\beta}
= c_{\alpha\beta}^{\phantom{\alpha\beta}\mu}({\bf t}) \frac{\partial h_{(N-1,\sigma)}}{\partial t^\mu}
\end{equation}
(together with certain normalization conditions).

The Frobenius manifold $\mathcal{M}_\mathcal{A}$ will automatically inherit such a
hierarchy by the very nature of it being a Frobenius manifold. However such a hierarchy is
best written as an $\mathcal{A}$-valued system, with $m$-$\mathcal{A}$-valued dependent
fields rather than $mn$-scalar-valued dependent fields.

We begin by showing how the deformed flat variables on $\mathcal{M}_\mathcal{A}$ may
be constructed from those on $\mathcal{M}\,.$ This is achieved by
lifting and evaluation the Hamiltonian densities for $\mathcal{M}\,.$

\begin{lem} Let $h_{N,\sigma}$ be the coefficients in the deformed flat connection on
$\mathcal{M}\,.$ Then the functions
\[
\mathfrak{h}_{(N,\sigma,r)} = \omega \left( \widehat{h_{(N,\sigma)}} \circ e_r \right)
\]
satisfy the recursion relation
\[
\frac{\partial^2 \mathfrak{h}_{(N,\sigma,r)}}{\partial t^{(\alpha i)} \partial t^{(\beta j)}} =
c_{(\alpha i)(\beta j)}^{\phantom{{(\alpha i)(\beta j)}}(\gamma k)}
\frac{\partial \mathfrak{h}_{(N-1,\sigma,r)}}{\partial t^{(\gamma k)}}\,.
\]
and the initial conditions $\mathfrak{h}_{(0,\sigma,r)} = \eta_{(\sigma r)(\mu s)} t^{(\mu s)}$
and hence define the deformed
flat coordinates on $\mathcal{M}_\mathcal{A}\,.$

\end{lem}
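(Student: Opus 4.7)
The plan is to verify the two claimed properties of $\mathfrak{h}_{(N,\sigma,r)}$ directly, relying on the same algebraic machinery that drove the proof of Theorem~\ref{MainA}: the multiplicativity of the lift $f \mapsto \hat f$, the differentiation formula \eqref{propertyB}, and the explicit expression \eqref{liftedmultiplication} for the lifted structure constants. Once both the recursion relation and the initial data are established, the standard characterization of deformed flat coordinates as the Hamiltonian densities of the principal hierarchy associated to $\mathcal{M}_{\mathcal A}$ identifies the $\mathfrak{h}_{(N,\sigma,r)}$ with that collection.

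The initial condition is essentially a bookkeeping check: starting from $h_{(0,\sigma)} = \eta_{\sigma\mu}t^\mu$, the lift is $\widehat{h_{(0,\sigma)}} = \eta_{\sigma\mu} t^{(\mu s)} e_s$, and applying $\omega(\,\cdot \circ e_r)$ together with $\omega(e_s \circ e_r) = \eta_{sr}$ and the factorized metric $\eta_{(\sigma r)(\mu s)} = \eta_{\sigma\mu}\eta_{sr}$ on $\mathcal{M}_{\mathcal{A}}$ established in Theorem~\ref{MainA} yields $\mathfrak{h}_{(0,\sigma,r)} = \eta_{(\sigma r)(\mu s)}t^{(\mu s)}$ as required.

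For the recursion relation, the approach is to compute both sides separately and match them. Two applications of \eqref{propertyB} commute the derivatives past the lift and produce
\[
\frac{\partial^2 \mathfrak{h}_{(N,\sigma,r)}}{\partial t^{(\alpha i)}\partial t^{(\beta j)}} = \omega\!\left( \widehat{\frac{\partial^2 h_{(N,\sigma)}}{\partial t^\alpha \partial t^\beta}} \circ e_i \circ e_j \circ e_r \right);
\]
substituting the scalar recursion \eqref{recursion} inside the hat and using that the lift is an algebra homomorphism rewrites this as $\omega\!\left(\widehat{c_{\alpha\beta}^{~~\mu}} \circ \widehat{\partial_\mu h_{(N-1,\sigma)}} \circ e_i \circ e_j \circ e_r\right)$. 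For the proposed right-hand side I would expand $c_{(\alpha i)(\beta j)}^{\phantom{(\alpha i)(\beta j)}(\gamma k)}$ using \eqref{liftedmultiplication} and write $\partial_{(\gamma k)} \mathfrak{h}_{(N-1,\sigma,r)}$ via \eqref{propertyB}, then use the identities $c_{ij}^{~~q}e_q = e_i\circ e_j$ and $c_{pq}^{~~k}e_k = e_p \circ e_q$ to absorb the structure-constant prefactors into compositions inside $\omega$, and finally reassemble $[\widehat{c_{\alpha\beta}^{~~\mu}}]^p e_p = \widehat{c_{\alpha\beta}^{~~\mu}}$ (summed on $p$). The two expressions then coincide.

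The main obstacle is purely notational: keeping Greek ($\mathcal{M}$-type) and Latin ($\mathcal{A}$-type) indices straight, and taking care that the two sets of structure constants are contracted on their respective factors before being reassembled inside $\omega$. Unlike the associativity argument in Theorem~\ref{MainA}, no metric-contracted sum over a pair of $e$'s appears here, so the identity \eqref{propertyA} is not needed---only the commutativity and associativity of $\circ$ on $\mathcal{A}$, together with multiplicativity of the hat, are invoked.
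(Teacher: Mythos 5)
Your proposal is correct and follows essentially the same route as the paper: both apply \eqref{propertyB} twice, insert the scalar recursion \eqref{recursion} under the hat, and recombine the $\mathcal{A}$-structure constants via $e_p\circ e_q=c_{pq}^{~~k}e_k$ to recognize $c_{(\alpha i)(\beta j)}^{\phantom{(\alpha i)(\beta j)}(\gamma k)}$ from \eqref{liftedmultiplication}, with the same bookkeeping check for $N=0$. The only difference is organizational (you match the two sides in the middle, the paper transforms the left side into the right), and your observation that \eqref{propertyA} is not needed here is consistent with the paper's argument.
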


\begin{proof}

This is a straightforward calculation (we drop the $\sigma$-label on the various $h$'s for clarity):
We have
\[
\frac{\partial \widehat{h_N}}{\partial t^{(\alpha i)}}
= \widehat{\left(\frac{\partial h_N}{\partial t^\alpha}\right)} \circ e_i
\]
and hence
\begin{eqnarray*}
\frac{\partial^2 \widehat{h_N}}{\partial t^{(\alpha i)} \partial t^{(\beta j)}}
& = & \widehat{\left(\frac{\partial^2 h_N}{\partial t^\alpha \partial t^\beta}\right)} \circ e_i \circ e_j\,,\\
& = &
\widehat{\left( c_{\alpha\beta}^{~~\gamma}\right)}
\circ
\widehat{ \frac{\partial h_{N-1}}{\partial t^\gamma} } \circ e_i \circ e_j\,.
\end{eqnarray*}
Thus using $\omega$ to evaluate this $\mathcal{A}$-valued expression gives
\begin{eqnarray*}
\frac{\partial^2 \mathfrak{h}_{(N,r)}}{\partial t^{(\alpha i)} \partial t^{(\beta j)}} & = &
\omega\left(
\widehat{\left(\frac{\partial^2 h_N}{\partial t^\alpha \partial t^\beta}\right)} \circ e_i
\circ e_j \circ e_r\right)\,,\\
& = & \left[ \widehat{ c_{\alpha\beta}^{~~\gamma} }\right]^p  c_{ij}^{~~q}
\omega\left(
\widehat{ \left(\frac{\partial h_{N-1}}{\partial t^\gamma}\right) } \circ e_p \circ e_q \circ e_r
\right)\,,\\
& = & \underbrace{\left[ \widehat{ c_{\alpha\beta}^{~~\gamma} }\right]^p
 c_{ij}^{~~q} c_{pq}^{~~k} }_{c_{(\alpha i)(\beta j)}^{\phantom{(\alpha i)(\beta j)} (\gamma k)}}
\omega\left(
\frac{\partial \widehat{h_{N-1}}}{\partial t^{(\gamma k)}} \circ e_r
\right)\,,\\
& = & c_{(\alpha i)(\beta j)}^{\phantom{(\alpha i)(\beta j)} (\gamma k)}
\frac{\partial \mathfrak{h}_{(N-1,r)}}{\partial t^{(\gamma k)}}\,.
\end{eqnarray*}
If $N=0$, then, since $\widehat{t^\mu}=t^{(\mu s)} e_s\,,$
\begin{eqnarray*}
\mathfrak{h}_{(0,\sigma,r)} & = & \omega\left( \widehat{ h_{(0,\sigma)}} \circ e_r\right)\,,\\
& = & \eta_{\sigma\mu} \eta_{rs} t^{(\mu s)} \omega\left(e_s \circ e_r\right)\,,\\
& = & \eta_{(\sigma r)(\mu s)} t^{(\mu s)}\,,
\end{eqnarray*}
which is, as required, a Casimir function on $\mathcal{M}_{\mathcal{A}}\,.$
\end{proof}

In the obvious way, one can lift the operators $\mathcal{P}_1\,,\mathcal{P}_2$
to $\mathcal{A}$-valued operators and obtain the following theorem:

\begin{thm}\label{thm3.2}
The principal hierarchy on $\mathcal{M}_\mathcal{A}$ may be written in terms
of $\mathcal{A}$-valued fields, densities and operators, as
\begin{equation}
\begin{array}{lcr}
\displaystyle{
\frac{\partial {\widehat t^\alpha}}{\partial T^{(N,\sigma, r)}}} & = &
\displaystyle{\widehat{\mathcal{P}_1^{\alpha\beta}} \circ \,
\frac{\partial \widehat{h_{(N,\sigma)}}}{\partial t^{(\beta r)}}}\,,\\
&&\\
& = & \displaystyle{\widehat{\mathcal{P}_2^{\alpha\beta}} \circ \,
\frac{\partial \widehat{h_{(N-1,\sigma)}}}{\partial t^{(\beta r)}}}\,.
\end{array}
\end{equation}

\end{thm}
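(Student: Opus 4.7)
The plan is to exploit the fact that $\mathcal{M}_\mathcal{A}$ is itself a Frobenius manifold by Theorem~\ref{MainA}, so Dubrovin's construction automatically produces a principal hierarchy on it in the coordinates $t^{(\alpha i)}$, with Hamiltonian densities $\mathfrak{h}_{(N,\sigma,r)}=\omega(\widehat{h_{(N,\sigma)}}\circ e_r)$ as identified in the preceding lemma. The theorem is therefore a repackaging statement: one must show that after multiplying the $(\alpha i)$ scalar flow equation by $e_i$ and summing over $i$, the $mn$-component system collapses to the $m$ $\mathcal{A}$-valued equations displayed in the claim.

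For the first Hamiltonian structure I would combine $\eta^{(\alpha i)(\beta j)}=\eta^{\alpha\beta}\eta^{ij}$ with
\[
\frac{\partial \mathfrak{h}_{(N,\sigma,r)}}{\partial t^{(\beta j)}}
=\left[\widehat{\frac{\partial h_{(N,\sigma)}}{\partial t^\beta}}\right]^{p}\omega(e_p\circ e_j\circ e_r),
\]
which is immediate from \eqref{propertyB}. The key identity
\[
\sum_i \eta^{ij}\,\omega(e_p\circ e_j\circ e_r)\,e_i \;=\; e_p\circ e_r
\]
is a direct consequence of \eqref{propertyA} (it is simply the musical isomorphism applied to the coefficients of $e_p\circ e_r$ in the $\mathcal{A}$-basis); it causes the sum $\sum_i e_i(\cdots)$ to telescope. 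Pulling $\partial_X$ back through the constant coefficients $\omega(e_p\circ e_j\circ e_r)$ then yields
\[
\frac{\partial\widehat{t^\alpha}}{\partial T^{(N,\sigma,r)}}
=\eta^{\alpha\beta}\,\partial_X\!\left(\widehat{\frac{\partial h_{(N,\sigma)}}{\partial t^\beta}}\circ e_r\right)
=\widehat{\mathcal{P}_1^{\alpha\beta}}\circ\frac{\partial \widehat{h_{(N,\sigma)}}}{\partial t^{(\beta r)}},
\]
under the natural interpretation $\widehat{\mathcal{P}_1^{\alpha\beta}}=\widehat{\eta^{\alpha\beta}}\,d/dX=\eta^{\alpha\beta}\,e_1\,d/dX$, with the outer $\circ$ denoting Frobenius multiplication.

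For the second Hamiltonian operator the same strategy applies but the bookkeeping is heavier because its coefficients depend on the dependent variables. I would first compute the intersection form on $\mathcal{M}_\mathcal{A}$,
\[
g^{(\alpha i)(\beta j)}=c^{(\alpha i)(\beta j)}_{\phantom{(\alpha i)(\beta j)}(\gamma k)}\,E^{(\gamma k)},
\]
by inserting the lifted structure constants \eqref{liftedmultiplication} and the explicit Euler vector field $E^\mathcal{A}$ recorded in the proof of Theorem~\ref{MainA}. The essential point is that $g^{(\alpha i)(\beta j)}$ emerges as $[\widehat{g^{\alpha\beta}}]^p\,\omega(e_p\circ e_i\circ e_j)$, the inhomogeneous $r_\alpha$-contributions entering only through the $(\cdot,1)$-directions in a way still compatible with the lifting formalism. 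An analogous computation for the Christoffel symbols, together with the identity $\sum_k t^{(\gamma k)}_X e_k=\partial_X\widehat{t^\gamma}$, then lets the same telescoping collapse $\sum_i e_i\,\mathcal{P}_2^{(\alpha i)(\beta j)}(\cdot)$ into a single $\mathcal{A}$-valued operator $\widehat{\mathcal{P}_2^{\alpha\beta}}\circ(\cdot)=\widehat{g^{\alpha\beta}}\circ d/dX+\widehat{\Gamma^{\alpha\beta}_\gamma}\circ\partial_X\widehat{t^\gamma}$.

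The main obstacle is precisely this second step: one must verify that the intersection form and its Christoffel symbols on $\mathcal{M}_\mathcal{A}$ factor as genuine lifts of their $\mathcal{M}$-counterparts so that the telescoping identity used for $\mathcal{P}_1$ carries over verbatim. Once both $\mathcal{A}$-valued operators are in place, the compatibility of the two structures, bi-Hamiltonianity of the $\mathcal{A}$-valued hierarchy, and the bi-Hamiltonian recursion between successive flows are all inherited from the corresponding properties of the principal hierarchy on the Frobenius manifold $\mathcal{M}_\mathcal{A}$.
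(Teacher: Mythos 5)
Your proposal is correct and follows essentially the same route as the paper: lift the component flows, contract with $e_i$, and use the identity $\eta^{ij}\omega(e_p\circ e_j\circ e_r)\,e_i=e_p\circ e_r$ to collapse the $mn$-component system into $\mathcal{A}$-valued form, with the second structure reduced to showing the lifted intersection form factors through $\widehat{g^{\alpha\beta}}$. The ``obstacle'' you flag is exactly what the paper computes directly, made easy by writing $\mathcal{P}_2^{\alpha\beta}=g^{\alpha\beta}\frac{d}{dX}+\bigl(\frac{d+1}{2}-q_\beta\bigr)c^{\alpha\beta}_{\phantom{\alpha\beta}\gamma}t^\gamma_X$ and noting that triviality of the Euler data on $\mathcal{A}$ gives $q_{(\beta j)}=q_\beta$ and the same $d$, so the Christoffel term is handled verbatim like the metric term.
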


\begin{proof}

\noindent{\underline{First Hamiltonian Structure}}

By definition, and on using previous results,
\begin{eqnarray*}
\frac{\partial t^{(\alpha i)}}{\partial T^{(N,\sigma,r)}} & = &
\eta^{(\alpha i)(\beta j)} \frac{d~}{dX} \frac{\partial
\mathfrak{h}_{(N,\sigma,r)}}{\partial t^{(\beta j)}}\,,\\
& = & \eta^{\alpha\beta} \eta^{ij} \frac{d~}{dX}
\left[ \widehat{ \frac{\partial h_{(N,\sigma)}}{\partial t^\beta}}\right]^k \omega(e_k\circ e_j\circ e_r)\,.
\end{eqnarray*}
Since $\widehat{t^\alpha} = t^{(\alpha i)} e_i$ by definition, one obtains
\begin{eqnarray*}
\frac{\partial \widehat{t^\alpha} }{\partial T^{(N,\sigma,r)}} & = & \eta^{\alpha\beta}
\frac{d~}{dX} \left[ \widehat{ \frac{\partial h_{(N,\sigma)}}{\partial t^\beta}}\right]^k \,
\eta^{ij} \omega(e_k\circ e_j\circ e_r) e_i\,,\\
& = & \eta^{\alpha\beta} \frac{d~}{dX} \left\{ \widehat{ \frac{\partial h_{(N,\sigma)}}{\partial t^\beta}}
 \circ e_r \right\}\,,\\
& = & \widehat{\eta^{\alpha\beta}}\circ \frac{d~}{dX}
\widehat{ \frac{\partial h_{(N,\sigma)}}{\partial t^{(\beta r)}}}\,,
\end{eqnarray*}
since as the components of $\eta$ are constants, $\widehat{\eta^{\alpha\beta}} = \eta^{\alpha\beta} e_1\,.$

\noindent{\underline{Second Hamiltonian Structure}}

The second Hamiltonian operator $\mathcal{P}^{\alpha\beta}_2$ on $\mathcal{M}$ takes the form\footnote{We ignore the precise normalization of the second Hamiltonian structure. We also
assume here that the manifold $\mathcal{M}$ is non-resonant. It is easy to show that if $\mathcal{M}$ is non-resonant, then so is $\mathcal{M}_\mathcal{A}\,.$}
\[
\mathcal{P}^{\alpha\beta}_2 = g^{\alpha\beta} \frac{d~}{dX} + \left(\frac{d+1}{2} - q_\beta\right)
 c^{\alpha\beta}_{\phantom{\alpha\beta}\gamma} t^\gamma_X
\]
and hence on $\mathcal{M}_\mathcal{A}\,,$
\begin{equation}\label{second}
\frac{\partial t^{(\alpha i)}}{\partial T^{(N,\sigma,r)}} = \left[
g^{(\alpha i)(\beta j)} \frac{d~}{dX} + \left(\frac{d+1}{2} - q_\beta\right)
c^{(\alpha i)(\beta j)}_{\phantom{(\alpha i)(\beta j)}(\gamma k)} t^{(\gamma k)}_X
\right]
\frac{ \partial \mathfrak{h}_{(N-1,\sigma, r)}}{\partial t^{(\beta j)}}\,.
\end{equation}
Note, since the Euler vector field on $\mathcal{A}$ is trivial ($q_i=r_i=d_\mathcal{A}=0$)
it follows that $q_{(\beta j)} = q_\beta$ and $d$ is the same on both $\mathcal{M}$ and
$\mathcal{M}_\mathcal{A}\,.$ Also, by definition,
\begin{eqnarray*}
g^{(\alpha i)(\beta j)} & = & c^{(\alpha i)(\beta j)}_{\phantom{(\alpha i)(\beta j)}(\gamma k)} E^{(\gamma k)} \,,\\
& = & \eta^{\beta \mu} \eta^{js} \left[ \widehat{ c_{\mu\gamma}^{\phantom{\mu\gamma}\alpha} } \right]^p
c_{sk}^{\phantom{sk}q} c_{pq}^{\phantom{pq}i} (1-q_\gamma) t^{(\gamma k)}\,.
\end{eqnarray*}
For simplicity we will consider the first term in (\ref{second}) only, the corresponding proof
of the second term follows practically verbatim the proof of the first.
Thus
\begin{eqnarray*}
g^{(\alpha i)(\beta j)} \frac{d~}{dX} \frac{ \partial \mathfrak{h}_{(N-1,\sigma, r)}}{\partial t^{(\beta j)}} & = &
\left[ c^{\alpha\beta}_{\phantom{\alpha\beta}\gamma}\right]^p c_{pk}^{\phantom{pk}q} (1-q_\gamma) t^{(\gamma k)}
\frac{d~}{dX}
\left[ \widehat{ \frac{\partial h_{(N-1,\sigma)}}{\partial t^\beta}} \right]^d \omega(e_d \circ e_j \circ e_r)\,,\\
& = & \left[ \widehat{g^{\alpha\beta}} \right]^q c_q^{\phantom{q}ij} \frac{d~}{dX}
\left[ \widehat{ \frac{\partial h_{(N-1,\sigma)}}{\partial t^\beta}} \right]^d c_{dr}^{\phantom{dr}s} \eta_{sj}\,,
\end{eqnarray*}
since $\widehat{g^{\alpha\beta}} = \widehat{c^{\alpha\beta}_{\phantom{\alpha\beta}\gamma}} \circ (1-q_\gamma)
 t^{(\gamma q)} e_q\,.$ On
using the associative and commutative properties of the multiplication, and on contracting with $e_i$ one obtains
\begin{eqnarray*}
g^{(\alpha i)(\beta j)} \frac{d~}{dX} \frac{ \partial \mathfrak{h}_{(N-1,\sigma, r)}}{\partial t^{(\beta j)}} e_i
& = &
 \left[ \widehat{g^{\alpha\beta}} \right]^q
c_{qs}^{\phantom{qs}i} \frac{d~}{dX}
 \left[ \widehat{ \frac{\partial h_{(N-1,\sigma)}}{\partial t^\beta}} \circ e_r \right]^s e_i\,,\\
& = & \widehat{g^{\alpha\beta}} \circ \frac{d~}{dX}
\left[ \widehat{ \frac{ \partial h_{(N-1,\sigma)}}{\partial t^\beta}} \circ e_r \right]\,,\\
& = & \widehat{g^{\alpha\beta}} \circ \frac{d~}{dX}
\widehat{ \frac{\partial h_{(N-1,\sigma)}}{\partial t^{(\beta r)}} }\,.
\end{eqnarray*}
Note that these flows on $\mathcal{M}_\mathcal{A}$ simplify if $r=1\,.$\end{proof}

\begin{ex}
If $\dim\mathcal{M}=1$ and $r=1$ one obtains the bi-Hamiltonian structures from the $\mathcal{A}$-valued
Mong\'e equation
\[
\mathcal{U}_T= \mathcal{U}\circ \mathcal{U}_X
\]
with conserved densities
\[
\mathfrak{h}_N = \frac{1}{(N+1)!} \omega( \underbrace{\mathcal{U} \circ \cdots \circ \mathcal{U}}_{N+1\rm{~terms}} )\,.
\]

\end{ex}

The form of the flows in Theorem \ref{thm3.2} is somewhat hybrid in nature and to rewrite them as a genuine $\mathcal{A}$-valued bi-Hamiltonian
system one must introduce the variational derivative with respect to an $\mathcal{A}$-valued field. Such a derivative was introduced in \cite{OS} and is defined
by the equation
\begin{equation}\label{Avardiff}
\la \delta \mathcal{H}; v \ra = \left.\frac{d~}{d\epsilon} \mathcal{H}\left[ \widehat{u^\alpha} + \epsilon \widehat{v^\alpha} \right]\right|_{\epsilon=0},
\end{equation}
where
\[
\mathcal{H}= \int \omega(\widehat{h}) \,dX\,.
\]
With this the flows may be written as an $\mathcal{A}$-valued bi-Hamiltonian
system.

\begin{cor} \label{cor3.4} The flows given in Theorem \ref{thm3.2} may be written as
\begin{equation}
\begin{array}{lcr}
\displaystyle{
\frac{\partial {\widehat t^\alpha}}{\partial T^{(N,\sigma, r)}}} & = &
\displaystyle{\widehat{\mathcal{P}_1^{\alpha\beta}} \circ \,
\frac{\delta \mathcal{H}_{(N,\sigma,r)}}{\delta \widehat{t^\beta}}}\,,\\
&&\\
& = & \displaystyle{\widehat{\mathcal{P}_2^{\alpha\beta}} \circ \,
\frac{\delta \mathcal{H}_{(N-1,\sigma,r)}}{\delta \widehat{t^\beta}}}\,,
\end{array}
\end{equation}
where
\[
\mathcal{H}_{(N,\sigma,r)} = \int \omega\left(\widehat{h_{(N,\sigma,r)}}\right) \, dX\,.
\]
\end{cor}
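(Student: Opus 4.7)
The plan is to show that the variational derivative $\dfrac{\delta \mathcal{H}_{(N,\sigma,r)}}{\delta \widehat{t^\beta}}$, in the sense of \eqref{Avardiff}, coincides with the partial derivative $\dfrac{\partial \widehat{h_{(N,\sigma)}}}{\partial t^{(\beta r)}}$ that already appears in Theorem \ref{thm3.2}. Given this identification, the corollary follows by direct substitution into the flows written in Theorem \ref{thm3.2}, so the real content is a one-line variational calculation.

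First I would perturb $\widehat{t^\alpha}\mapsto \widehat{t^\alpha}+\epsilon\,\widehat{v^\alpha}$, which at the level of scalar coordinates amounts to $t^{(\alpha i)}\mapsto t^{(\alpha i)}+\epsilon\,v^{(\alpha i)}$, and compute
\[
\left.\frac{d}{d\epsilon}\right|_{\epsilon=0}\widehat{h_{(N,\sigma)}}
\;=\;\sum_{(\alpha i)} v^{(\alpha i)}\,\frac{\partial \widehat{h_{(N,\sigma)}}}{\partial t^{(\alpha i)}}
\;=\;\widehat{\frac{\partial h_{(N,\sigma)}}{\partial t^\alpha}}\circ \widehat{v^\alpha},
\]
where the second equality uses the fundamental lifting property \eqref{propertyB} together with $\widehat{v^\alpha}=v^{(\alpha i)}e_i$. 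Multiplying by $e_r$ and applying $\omega$ before integrating then gives
\[
\left.\frac{d}{d\epsilon}\right|_{\epsilon=0}\mathcal{H}_{(N,\sigma,r)}
\;=\;\int \omega\!\left(\widehat{\frac{\partial h_{(N,\sigma)}}{\partial t^\alpha}}\circ e_r\circ \widehat{v^\alpha}\right)dX.
\]
Because the Hamiltonian density depends only on the fields themselves and not their $X$-derivatives, no integration by parts is needed. Reading off the variational derivative from the pairing defined in \eqref{Avardiff} (the non-degeneracy of $\omega$ and the arbitrariness of $\widehat{v^\alpha}$ ensure uniqueness of the coefficient) yields
\[
\frac{\delta \mathcal{H}_{(N,\sigma,r)}}{\delta \widehat{t^\beta}}
\;=\;\widehat{\frac{\partial h_{(N,\sigma)}}{\partial t^\beta}}\circ e_r.
\]

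A second application of \eqref{propertyB} immediately rewrites the right-hand side as $\dfrac{\partial \widehat{h_{(N,\sigma)}}}{\partial t^{(\beta r)}}$, which is exactly the factor appearing in Theorem \ref{thm3.2}. Substituting this identification into the two equivalent forms of the principal hierarchy given there produces the two bi-Hamiltonian expressions claimed in the corollary. I do not expect a serious obstacle here; the only point to be slightly careful about is the identification of the variational derivative from a pairing that is itself valued in $\mathcal{A}$ rather than in $\mathbb{R}$, but the non-degeneracy of the Frobenius form $\omega$ on $\mathcal{A}$ (together with the fact that $\widehat{v^\alpha}$ ranges over all of $\mathcal{A}$ as $v^{(\alpha i)}$ ranges over $\mathbb{R}^n$) is precisely what makes this identification unambiguous, so the derivation above is essentially complete.
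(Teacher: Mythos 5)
Your proposal is correct and takes essentially the same approach as the paper: both compute the $\epsilon$-variation of $\mathcal{H}_{(N,\sigma,r)}$ using the lifting property \eqref{propertyB}, read off $\dfrac{\delta \mathcal{H}_{(N,\sigma,r)}}{\delta \widehat{t^\beta}} = \widehat{\dfrac{\partial h_{(N,\sigma)}}{\partial t^\beta}}\circ e_r = \dfrac{\partial \widehat{h_{(N,\sigma)}}}{\partial t^{(\beta r)}}$ from the pairing \eqref{Avardiff}, and substitute into Theorem \ref{thm3.2}. Your extra remarks (no integration by parts needed, uniqueness via non-degeneracy of $\omega$) are accurate but not points where the paper's argument differs.
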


\begin{proof}

From \eqref{Avardiff}\,,
\begin{eqnarray*}
\la \delta \mathcal{H}_{(N,\sigma,r)} ; \widehat{v^\beta} \ra & = & \int \omega \left( \frac{ \partial \widehat{h_{(N,\sigma,r)}}}{\partial t^{(\beta j)}} v^{(\beta j)} \circ e_r \right) \, dX\,,\\
& = &\int \omega\left(
\widehat{ \frac{\partial h_{(N,\sigma,r)}}{\partial t^\beta} }
\circ e_r \circ \underbrace{v^{(\beta j)} e_j}_{\widehat{v^\beta}} \right) \, dX\,,
\end{eqnarray*}
and hence
\[
\frac{\delta \mathcal{H}_{(N,\sigma,r)}}{\delta \widehat{t^\beta}} =
\frac{ \partial \widehat{h_{(N,\sigma)}}}{\partial t^{\beta}} \circ e_r \,.
\]
With this, the result follows immediately.
\end{proof}

\subsection{Polynomial (inverse)-metrics and bi-Hamiltonian structures}

Since all 1-dimensional metrics are flat, it follows immediately from the Dubrovin-Novikov \cite{DN} Theorem that the operator
\[
{\mathcal{P}} = f(u) \frac{d~}{dX} + \frac{1}{2} f^\prime(u)
\]
is Hamiltonian. In this section we study the case where $f$ is a polynomial.

\begin{ex}
Applying the lifting procedures to the operator $\mathcal{P}$ defined by the linear function $f(u)=u+\lambda$ results in the linear
operator
\begin{equation}
{\mathcal{P}^{ij}} = \left\{  c^{ij}_k u^k_X  \frac{d~}{dX} + \frac{1}{2} c^{ij}_k u^k_X \right\} + \lambda  \frac{d~}{dX}
\label{basicBNresult}
\end{equation}
defined on the Frobenius algebra $\mathcal{A}\,.$
\end{ex}

\noindent This is the Hamiltonian operator first constructed by Balinski and Novikov \cite{BN}. Similarly, more complicated examples by be obtains
by starting with more general polynomials and applying the same procedure.

These more general examples appear to be in contradiction to an alternative method of constructing Hamiltonian operators via
bi-Hamiltonian recursion. The recursion operator constructed from the bi-Hamiltonian pencil (\ref{basicBNresult}) takes the form
\[
{\mathcal{R}}^i_j = c^i_{jk} u^k + \frac{1}{2} c^i_{jk} u^k_X \left( \frac{d~}{dX} \right)^{-1}\,.
\]
Suppose one has a (local) Hamiltonian operator
\[
\mathcal{P}_n = g^{ij}_{(n)}(u) \frac{d~}{dX} + \Gamma^{ij}_{(n)k}(u) u^k_X
\]
with $g^{ij}_{(0)} = \eta^{ij}\,, \Gamma^{ij}_{(0)k}=0\,.$ Applying the operator $\mathcal{R}$ gives
\[
\left(\mathcal{R}\mathcal{P}_{(n)}\right)^{ij} = \left\{ g^{ij}_{(n+1)}(u) \frac{d~}{dX} + \Gamma^{ij}_{(n+1)k}(u) u^k_X\right\} + {\rm{non-local~terms}}
\]
and we now {\sl define} $\mathcal{P}_{(n+1)}$ to be the local-term in the above expression. This gives the recursion scheme:
\begin{eqnarray*}
g^{ij}_{(n+1)} & = & 2 c^{ip}_r u^r \eta_{pq} g^{qj}_{(n)}\,,\\
\Gamma^{ij}_{(n+1)k} & = & 2 c^{ip}_r u^r \eta_{pq} \Gamma^{qj}_{(n)k} + c^{ip}_k \eta_{pq} g^{qj}_{(n)}\,.
\end{eqnarray*}
It is a tedious, through straightforward exercise to show that, if the pair $\{g_{(n)},\Gamma_{(n)}\}$ defines a flat metric, then so does $\{g_{(n+1)},\Gamma_{(n+1)}\}$,
and hence $\mathcal{P}_{(n)}$ is a local Hamiltonian operator for all $n\,.$ The above lifting procedure circumvents such a direct computational approach. The fact that
the local and non-local parts of the Hamiltonian operator define separate, compatible, Hamiltonian operator is of course, well known (see, for example, \cite{GLR}).

\section{$\mathcal{A}$-valued dispersive integrable systems}\label{dispersivesection}

In this section the above ideas are extended to include dispersive, higher-order, dispersive systems.

\subsection{$\mathcal{A}$-valued dispersive integrable systems}\label{dispersivesectionA}

The main result of this section is the following theorem:

\begin{thm}
 Let $u=\{u^\alpha(x,t)|\alpha=1,\cdots,n \}$. Let
\beq u^\alpha_t=K^\alpha(u,u_x,\cdots) \label{eq2.4} \eeq
be a Hamiltonian system with the Hamiltonian $H[u]$, then the corresponding $\mathcal{A}$-valued system
 \beq \widehat{u^\alpha_t}=\widehat{K^\alpha(u,u_x,\cdots)} \label{eq2.5}\eeq
is also Hamiltonian with the Hamiltonian $\mathcal{H}[\widehat{u}]=\trf \left(\widehat{H[u]}\right)$.\end{thm}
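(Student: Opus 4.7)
The plan is to reduce the theorem to the single identity
\[
\frac{\delta\mathcal{H}}{\delta\widehat{u^\beta}} = \widehat{\frac{\delta H}{\delta u^\beta}}\,,
\]
which extends Corollary \ref{cor3.4} from densities depending only on $u^\alpha$ to densities depending on $u^\alpha$ and its $x$-derivatives. The groundwork is the observation that hat-lifting commutes with $\partial_x$: since the basis vectors $e_i$ are $x$-constants,
\[
\partial_x\widehat{u^\alpha} = \partial_x\bigl(t^{(\alpha i)}e_i\bigr) = t^{(\alpha i)}_x e_i = \widehat{u^\alpha_x}\,,
\]
and iterating gives $\partial_x^k\widehat{u^\alpha} = \widehat{u^\alpha_{(k)}}$, where $u^\alpha_{(k)} := \partial_x^k u^\alpha$. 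Combined with $\widehat{fg}=\widehat{f}\circ\widehat{g}$, this shows that for any differential polynomial $K^\alpha(u,u_x,\ldots)$ the lift $\widehat{K^\alpha}$ is obtained by replacing $u^\alpha_{(k)}$ by $\widehat{u^\alpha_{(k)}}$ and ordinary products by $\circ$-products. In particular, the lift $\widehat{P^{\alpha\beta}}$ of a differential Hamiltonian operator acts on $\mathcal{A}$-valued functions by the same differential formula, with coefficient-multiplication replaced by $\circ$-multiplication.

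Next I would compute the lifted variational derivative. From \eqref{Avardiff} and the chain rule,
\[
\la \delta\mathcal{H}; \widehat{v}\ra = \left.\frac{d}{d\epsilon}\right|_{\epsilon=0}\!\int \omega\bigl(\widehat{H[u+\epsilon v]}\bigr)\,dX = \int \omega\!\left(\sum_k \widehat{\frac{\partial H}{\partial u^\beta_{(k)}}} \circ \partial_x^k \widehat{v^\beta}\right) dX.
\]
Because $\omega$ is $\mathbb{R}$-linear, $\partial_x$ commutes with $\omega$ inside the integral, and $\circ$ is commutative and associative, the standard integration-by-parts goes through under $\omega$. Transferring each $\partial_x^k$ onto the first factor with sign $(-1)^k$ and summing gives
\[
\la \delta\mathcal{H}; \widehat{v}\ra = \int \omega\!\left(\widehat{\frac{\delta H}{\delta u^\beta}} \circ \widehat{v^\beta}\right) dX,
\]
which by definition of the pairing \eqref{Avardiff} is the desired identification. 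Applying the hat to both sides of $u^\alpha_t = P^{\alpha\beta}(\delta H/\delta u^\beta)$ then yields
\[
\widehat{u^\alpha_t} = \widehat{P^{\alpha\beta}} \circ \frac{\delta\mathcal{H}}{\delta\widehat{u^\beta}},
\]
exhibiting \eqref{eq2.5} as Hamiltonian with Hamiltonian $\mathcal{H}$ and operator $\widehat{P^{\alpha\beta}}$.

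The hard part will be verifying that $\widehat{P^{\alpha\beta}}$ genuinely defines a Poisson structure on $\mathcal{A}$-valued functionals, i.e., that the induced bracket
\[
\{\mathcal{F},\mathcal{G}\} := \int \omega\!\left(\widehat{\frac{\delta F}{\delta u^\alpha}}\circ \widehat{P^{\alpha\beta}} \circ \widehat{\frac{\delta G}{\delta u^\beta}}\right) dX
\]
is skew-symmetric and satisfies the Jacobi identity. Both properties should descend from the corresponding properties of $P$ because the hat map is a homomorphism of commutative associative algebras and the trace form $\omega$ converts $\mathcal{A}$-valued identities into scalar integral identities; commutativity of $\mathcal{A}$ is essential, so that no reordering issues arise when transcribing the classical bracket computations into the $\mathcal{A}$-valued setting.
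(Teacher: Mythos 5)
Your proposal is correct and follows essentially the same route as the paper: the whole weight is carried by the identity $\dfrac{\delta\mathcal{H}}{\delta\widehat{u^\beta}}=\widehat{\dfrac{\delta H}{\delta u^\beta}}$ (the paper's equation for the lifted variational derivative, which you derive by integrating by parts under $\omega$ while the paper lifts the scalar first-variation formula and applies $\omega$), after which the flow is $\widehat{u^\alpha_t}=\widehat{\mathcal{P}^{\alpha\beta}}\circ\delta\mathcal{H}/\delta\widehat{u^\beta}$. Your closing remark that the Poisson properties ``descend'' via the hat homomorphism and $\omega$ is exactly how the paper argues it, made explicit there through the intertwining relation $\{\mathcal{F},\mathcal{G}\}_{\mathcal{A}}=\omega\bigl(\widehat{\{F,G\}}\bigr)$ for lifted functionals.
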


\begin{proof}The proof is very similar to those done in section \ref{dispersionlesssection}.
Without loss of generality, we assume that the system \eqref{eq2.4} can be written as
\beq u^\alpha_t=\{u^\alpha, H[u]\}=\mathcal{P}^{\alpha\beta}\dfrac{\delta h}{\delta u^\beta}, \quad H[u]=\int h(u)dx,\eeq
where $\mathcal{P}^{\alpha\beta}$ is a Hamiltonian operator.  So the system \eqref{eq2.5} reads
\beq \widehat{u^\alpha_t}=\widehat{\mathcal{P}^{\alpha\beta}}\circ\widehat{\dfrac{\delta h}{\delta u^\beta}}.
\label{eq2.7}\eeq

Let
\beq \mathcal{H}[\widehat{u}]=\int \mathfrak{h}(\widehat{u})dx,\quad
\mathfrak{h}(\widehat{u})=\trf \left(\widehat{h(u)}\right).\eeq
With respect to an $\mathcal{A}$-valued field, the variational derivative
$\dfrac{\delta \mathfrak{h}}{\delta \widehat{u^\beta}}$ is defined by the formula, essentially due to \cite{OS},
\beq
\trf \int \left(\dfrac{\delta \mathfrak{h}}{\delta \widehat{u^\beta}}\circ \widehat{\delta u^\beta} \right)dx
 = \left.\frac{d}{d\epsilon}\right|_{\epsilon=0}
\mathcal{H}\left[ \widehat{u^\beta} + \epsilon \widehat{\delta u^\beta} \right].
\eeq
Observe that
\eqa && \left.\frac{d}{d\epsilon}\mathcal{H}\left[ \widehat{u^\beta}
+ \epsilon \widehat{\delta u^\beta} \right]\right|_{\epsilon=0}=
\left.\frac{d}{d\epsilon}\right|_{\epsilon=0}\trf \left(\int h(\widehat{u^\beta}+ \epsilon\widehat{\delta u^\beta}) dx\right)\,,\\
&=&\trf \left(\widehat{\left.\frac{d}{d\epsilon}\right|_{\epsilon=0} H[{u^\beta}+ \epsilon{\delta
u^\beta}]}\right)=\trf\left(\int\left(\widehat{\dfrac{\delta h}{\delta u^\beta}}\circ \widehat{\delta u^\beta}\right)dx \right) \nn \eeqa
from which follows
\beq \dfrac{\delta \mathfrak{h}}{\delta \widehat{u^\beta}}=\widehat{\dfrac{\delta h}{\delta u^\beta}}.\label{eq2.11}\eeq

For two functionals
\beq \mathcal{F}[\widehat{u}]=\int \mathfrak{f}(\widehat{u})dx,\quad
\mathcal{G}[\widehat{u}]=\int \mathfrak{g}(\widehat{u})dx,\eeq
with $\mathfrak{f}(\widehat{u})=\trf \left(\widehat{f(u)}\right)$ and $\mathfrak{g}(\widehat{u})=\trf \left(\widehat{g(u)}\right)$,
we define a bilinear bracket as
\beq \left\{\mathcal{F}[\widehat{u}], \mathcal{G}[\widehat{u}]\right\}_{\mathcal{A}}=\trf \left(\int
\dfrac{\delta \mathfrak{g}}{\delta \widehat{u^\alpha}}\circ  \widehat{\mathcal{P}^{\alpha\beta}} \circ
\dfrac{\delta \mathfrak{g}}{\delta \widehat{u^\beta}} dx\right).\label{eq2.13}\eeq
By using the definition of the hat map and \eqref{eq2.11}, we rewrite the bracket \eqref{eq2.13} as
\beq\left\{\mathcal{F}[\widehat{u}], \mathcal{G}[\widehat{u}]\right\}_{\mathcal{A}}=\trf
\widehat{\{F[u], G[u]\}},\eeq
where $F[u]=\int f(u)dx$ and $G[u]=\int g(u)dx$. Consequently, we conclude that
the bracket $\{~,~\}_{\mathcal{A}}$ is also a Poisson bracket. Furthermore using
\eqref{eq2.11}, the system \eqref{eq2.7} could be written as
\beq u^{(\alpha,i)}_t=\{u^{(\alpha,i)}, \mathcal{H}[\widehat{u}]\}_{\mathcal{A}},\quad
\mathcal{H}[\widehat{u}]=\int \trf \left(\widehat{h(u)}\right) dx.\nn\eeq
We thus complete the proof of the theorem.\end{proof}

\begin{cor}The $\mathcal{A}$-valued version of the Hamiltonian system
$u^\alpha_t=\{u^\alpha, H[u]\}$ is also Hamiltonian and given by
 \beq u^{(\alpha,i)}_t=\{u^{(\alpha,i)}, \mathcal{H}[\widehat{u}]\}_{\mathcal{A}},\quad
\mathcal{H}[\widehat{u}]=\trf\left(\widehat{H[u]}\right).\nn\eeq
\end{cor}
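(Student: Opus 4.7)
The plan is to observe that the corollary is an explicit component-wise restatement of the preceding theorem, so my proof would amount to unpacking the $\mathcal{A}$-valued Hamiltonian equation of that theorem into scalar components indexed by the pair $(\alpha,i)$.

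First, I would invoke the theorem to obtain the $\mathcal{A}$-valued flow $\widehat{u^\alpha_t} = \widehat{\mathcal{P}^{\alpha\beta}} \circ \widehat{\delta h/\delta u^\beta}$, which is Hamiltonian with functional $\mathcal{H}[\widehat{u}] = \int \omega(\widehat{h(u)})\,dx$. Writing $\widehat{u^\alpha} = u^{(\alpha,i)} e_i$, so that $\widehat{u^\alpha_t} = u^{(\alpha,i)}_t\,e_i$, I would extract the coefficient of each basis vector $e_i$ from both sides to obtain scalar evolution equations for the individual fields $u^{(\alpha,i)}$.

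Next, I would identify each such scalar component with $\{u^{(\alpha,i)}, \mathcal{H}[\widehat{u}]\}_{\mathcal{A}}$ by direct application of the definition of the bracket in (2.13) together with the key identity (2.11), namely $\delta \mathfrak{h}/\delta \widehat{u^\beta} = \widehat{\delta h/\delta u^\beta}$. Because $\omega$ induces a non-degenerate pairing on $\mathcal{A}$, reading off the $e_i$-coefficient of an $\mathcal{A}$-valued expression is equivalent to evaluating $\omega$ after circle-multiplication by an appropriate dual basis element, which is precisely the operation encoded by the bracket $\{u^{(\alpha,i)},\,\cdot\,\}_{\mathcal{A}}$ when one varies with respect to the scalar field $u^{(\alpha,i)}$.

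Since the theorem has already proved the substantive facts --- that $\{~,~\}_{\mathcal{A}}$ is a Poisson bracket, that the lifted Hamiltonian has the claimed trace-form expression, and that the evolution equation takes Hamiltonian form --- the corollary follows with no real obstacle beyond this index bookkeeping. In fact, the final display in the theorem's proof already exhibits the stated formula verbatim, so the corollary serves primarily to record this consequence in the convenient scalar-component Poisson form, emphasizing that the $mn$ scalar fields $u^{(\alpha,i)}$ evolve via a genuine Poisson bracket on the lifted phase space.
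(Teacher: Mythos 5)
Your proposal is correct and follows essentially the same route as the paper: the corollary has no separate proof there because the final display in the proof of the preceding theorem already writes the lifted system as $u^{(\alpha,i)}_t=\{u^{(\alpha,i)},\mathcal{H}[\widehat{u}]\}_{\mathcal{A}}$, using exactly the identity \eqref{eq2.11} and the bracket \eqref{eq2.13} that you invoke. Your component-extraction bookkeeping via the nondegenerate pairing induced by $\omega$ is just an explicit spelling-out of that same step, so there is nothing further to add.
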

These results extend naturally to the lifts of bi-Hamiltonian structures, yielding $\mathcal{A}$-valued bi-Hamiltonian operators.

\subsection{mKdV and (modified)-Camassa-Holm bi-Hamiltonian structures}

The celebrated Muira transformation maps the second Hamiltonian operator of the KdV hierarchy to constant form. Explicitly, if
\[
\mathcal{H}^{KdV}_1 = -D\,, \qquad \mathcal{H}^{KdV}_2 = -D^3 + 2 u D + u_X
\]
(in this section we write $D$ in place of $\frac{d~}{dX}$). Then applying the Miura map $u=-v_X+\frac{1}{2} v^2$ gives
\[
\mathcal{H}^{KdV}_2 = \mathcal{H}^{mKdV}_1=D\,,
\]
and the second mKdV structure is then obtained by applying the same map to the third KdV Hamiltonian structure defined by bi-Hamiltonian
recursion ($\mathcal{H}_3=\mathcal{H}_2 \mathcal{H}^{-1} \mathcal{H}_2$), yielding the non-local operator
\[
\mathcal{H}_2^{mKdV} = D^3 -D v D^{-1} v D\,.
\]
Just as the Balinski-Novikov structures on the Frobenius algebra $\mathcal{A}$ may be obtained by lifting, so $\mathcal{A}$-valued non-local operators
may be found by using the above results.

\begin{prop}
The $\mathcal{A}$-valued operators, defined by lifting $\mathcal{H}^{mKdV}_1$ and $\mathcal{H}^{mKdV}_2$ to the Frobenius algebra $\mathcal{A}$ are:
\begin{eqnarray*}
\left(\mathcal{H}^{mKdV}_1\right)^{ij} & = & \eta^{ij} D\,,\\
\left(\mathcal{H}^{mKdV}_2\right)^{ij} & = & \eta^{ij} D^3 - c^{ij}_p c^p_{mn} D v^m D^{-1} v^n D\,.
\end{eqnarray*}
These may also be obtained using the $\mathcal{A}$-valued Miura map
\[
u = -v_x + \frac{1}{2} v \circ v\,.
\]
\end{prop}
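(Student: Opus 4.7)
The plan is to apply the general dispersive lifting theorem of Section \ref{dispersivesectionA} to the two scalar mKdV Hamiltonian operators, and then verify the Miura connection separately. For the first structure, the scalar flow $v_t = D(\delta h/\delta v)$ lifts to $\hat v_t = D\,\widehat{\delta h/\delta v}$. Writing $\widehat{\delta h/\delta v} = G^k e_k$, the relation \eqref{eq2.11} combined with $\omega(e_i\circ e_j) = \eta_{ij}$ gives $\delta\mathfrak{h}/\delta v^j = G^k\eta_{kj}$, whence $G^k = \eta^{jk}\delta\mathfrak{h}/\delta v^j$. The $i$-th component of the lifted flow then reads $v^i_t = \eta^{ij} D(\delta\mathfrak{h}/\delta v^j)$, identifying $(\mathcal{H}_1^{mKdV})^{ij} = \eta^{ij} D$.

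For the second structure, apply the same lifting to $D^3 - DvD^{-1}vD$: each scalar product $v\cdot$ is replaced by $\hat v\,\circ$. With $\hat\phi := \widehat{\delta h/\delta v} = \phi^l e_l$, a direct expansion of $D(\hat v\circ D^{-1}(\hat v\circ D\hat\phi))$ in components yields the lifted flow
\[
v^q_t \;=\; D^3\phi^q \;-\; c^q_{mp}\, c^p_{kl}\; D\!\left[\,v^m\, D^{-1}\!\left(v^k\, D\phi^l\right)\right]\!.
\]
Substituting $\phi^l = \eta^{jl}\delta\mathfrak{h}/\delta v^j$ and invoking the identity $c^i_{mp}\, c^p_{ks}\,\eta^{sj} = c^{ij}_p\, c^p_{mn}$ (with $n=k$), which follows from the total symmetry of the $4$-fold quantity $\omega(e_i\circ e_j\circ e_m\circ e_n)$ together with the associativity relation $c^p_{ab}\,c^i_{pc} = c^p_{bc}\,c^i_{ap}$, one obtains the claimed expression $(\mathcal{H}_2^{mKdV})^{ij} = \eta^{ij}D^3 - c^{ij}_p\, c^p_{mn}\,D v^m D^{-1} v^n D$.

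For the Miura statement, compute the Fr\'echet derivative of $\hat u = -\hat v_x + \tfrac{1}{2} \hat v\circ\hat v$ component-wise as the matrix differential operator $\hat L^i_j = c^i_{jk}v^k - \delta^i_j\, D$, with formal adjoint $(\hat L^*)^l_j = c^l_{jk}v^k + \delta^l_j\, D$. The classical scalar Miura computation $LDL^* = -D^3 + 2uD + u_x$ transposes verbatim once ordinary products are replaced by $\circ$-products, using $[D,\hat v\circ] = \hat v_x\circ$ and the corresponding commutator for $D^2$. One then verifies that $\hat L\,(\eta^{ij} D)\,\hat L^*$ coincides with the lift of $\mathcal{H}_2^{KdV}$, and that $\hat L\,(\mathcal{H}_2^{mKdV})^\mathcal{A}\,\hat L^*$ coincides with the lift of $\mathcal{H}_3^{KdV}$ obtained by bi-Hamiltonian recursion on the KdV side. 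The main obstacle throughout is the tensor bookkeeping: one must use associativity and commutativity of $\circ$ to collapse composite structure constants such as $c^q_{mp}\, c^p_{ks}\,\eta^{sj}$ into a manifestly $(i,j)$-symmetric form, and parallel symmetry manipulations are needed to match the Miura push-forward against the directly-lifted KdV operators.
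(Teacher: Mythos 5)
Your proposal is correct and follows essentially the same route as the paper: the operators are obtained by the general lifting theorem of Section \ref{dispersivesectionA}, with the remaining identifications (the collapse of composite structure constants via associativity/commutativity, and the Miura push-forward mirroring the scalar computation with $\circ$-products) being exactly the ``direct but tedious'' verification the paper invokes. You simply spell out the component bookkeeping that the paper leaves implicit, and your index identity $c^q_{mp}c^p_{kl}\eta^{lj}=c^{qj}_p c^p_{mk}$ and Fr\'echet-derivative form of the lifted Miura map are both sound.
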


\begin{proof} These results follow directly by applying the results in section \ref{dispersivesectionA}. They may also be obtained by direct (but tedious) calculation. The form of the $\mathcal{A}$-valued
Miura map is obvious, and again can be verified by direct calculations. While not developed here, one should be able to applying lifting results directly to scalar-Miura maps, with all the
actions commuting.
\end{proof}

\noindent $\mathcal{A}$-valued KdV and mKdV equations can now easily be constructed, the KdV examples coinciding with the examples constructed in \cite{SS}. Here we construct $\mathcal{A}$-valued modified Camassa-Holm equations.

\begin{ex} One may apply the standard tri-Hamiltonian \lq tricks\rq~\cite{F} to obtain the $\mathcal{A}$-valued bi-Hamiltonian pair:
\begin{eqnarray*}
\mathcal{C}^{ij}_1 & = & \eta^{ij} (D^3+D)\,,\\
\mathcal{C}^{ij}_2 & = & c^{ij}_p c^p_{mn} D v^m D^{-1} v^n D\,.
\end{eqnarray*}
Starting with the lifted Casimir of the scalar operator $\mathcal{C}_1$ one obtains the multi-component modified Camassa-Holm equation
\begin{eqnarray*}
v_T + v_{XXT} & = &\phantom{+} \frac{1}{2} v_{XXX} \circ v_X \circ v_X + v_{XX} \circ v_{XX} \circ v_x\\
&&  + \frac{1}{2} v_{XXX} \circ v \circ v + 2 v_{XX} \circ v_X \circ v + \frac{1}{2} v_X \circ v_X \circ v_X\\
&& + \frac{3}{2} v_X \circ v \circ v\,.
\end{eqnarray*}

\end{ex}

\noindent Note we use the adjective \lq modified\rq~in the original, strict, sense of equations obtained from an original, unmodified, equation via the
action of a Miura map, rather than in the looser sense of just modifying \lq by-hand\rq~the terms that appear in the equation. Two-component
examples may easily be found using one of the algebras constructed in example \ref{ex2.2}\,.

\section{Conclusions}\label{section5}

Central to the results of this paper is the use of a distinguished coordinate system, namely the flat coordinates of the Frobenius
manifold $\mathcal{M}\,.$ But the lifting procedure may be applied to any geometric structure which is analytic in some fixed
coordinate system. However, such results loose some of their coordinate free character: one is using a specific coordinate system
to define new objects then relying in their tensorial properties to define then properly in an arbitrary system of coordinates.
As an example of this, one can apply the idea to $F$-manifolds defined by Hertling and Manin \cite{HM}.

\begin{prop}
Consider an $F$-manifold with structure functions $c_{\alpha\beta}^{\phantom{\alpha\beta}\gamma}(t)$ analytic in the coordinates $\{ t^\alpha \}\,.$
Let $\mathcal{A}$ be an arbitrary Frobenius algebra. Then the structure functions defined by the lifted multiplication \eqref{liftedmultiplication}
\[
c_{(\alpha i)(\beta j)}^{\phantom{{(\alpha i)(\beta j)}}(\gamma k)}
= \left[ \widehat{c_{\alpha\beta}^{~~\gamma}} \right]^p \, c_{ij}^{~~q} c_{pq}^{~~k}
\]
define an $F$-manifold.
\end{prop}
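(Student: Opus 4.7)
The $F$-manifold axioms split into (i) a commutative, associative, unital multiplication on $T\mathcal{M}_{\mathcal{A}}$, and (ii) the Hertling--Manin integrability condition
\[
P_{X \circ Y}(Z,W) = X \circ P_Y(Z,W) + Y \circ P_X(Z,W),
\]
where $P_X(Y,Z) := [X, Y \circ Z] - [X,Y] \circ Z - Y \circ [X,Z]$. Part (i) is inherited directly from the tensor-product form of the lifted multiplication \eqref{liftedmultiplication}: commutativity from the symmetry of $c_{\alpha\beta}^{\gamma}$ and $c_{ij}^{k}$, associativity from the tensor-factor rule $(X \otimes a) \circ (Y \otimes b) = (X \circ Y) \otimes (a \circ b)$, and the unit from $\partial_{(11)} = \partial_1 \otimes e_1$. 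None of these checks uses the Frobenius metric, the trace form $\omega$, or the WDVV equation, so they carry over essentially verbatim from the corresponding parts of the proof of Theorem \ref{MainA}.

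The substantive content is (ii). A short computation, replacing any slot of the HM expression by $fX$ and applying the Leibniz rule for the multiplication and the bracket, shows that the two sides transform identically under such a rescaling, so the HM identity is $C^{\infty}$-linear in each of its four vector-field arguments. It therefore suffices to verify it on coordinate vector fields $\partial_{(\alpha i)}, \partial_{(\beta j)}, \partial_{(\gamma k)}, \partial_{(\delta l)}$, where it becomes a polynomial identity among the six lifted structure coefficients and their first partial derivatives. The plan is to expand each of these six coordinate terms using \eqref{liftedmultiplication} together with the two basic identities
\[
\partial_{(\alpha i)} \hat{f} = \widehat{\partial_\alpha f} \circ e_i, \qquad \widehat{fg} = \hat{f} \circ \hat{g},
\]
that also drove the proof of Theorem \ref{MainA}.

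The computational heart of the argument is the claim that each of the six expanded terms factorises in the uniform shape
\[
\bigl[\widehat{T_r}\bigr]^v \, [e_v \circ e_i \circ e_j \circ e_k \circ e_l]^{m},
\]
where $T_r$ is the $r$-th scalar term in the HM expression on $\mathcal{M}$ and $[e_v \circ e_i \circ e_j \circ e_k \circ e_l]^{m}$ denotes the coefficient of $e_m$ in the indicated fivefold product in $\mathcal{A}$. This uniform form is achieved by repeatedly using associativity and commutativity of $\mathcal{A}$ to condense every chain of structure-constant contractions into a single fivefold product, and by contracting any spare pair of free $\mathcal{A}$-indices $(u,r)$ via $c_{ur}^{v}$ to produce the single-indexed lift $[\widehat{T_r}]^v$. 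Summing the six contributions then collapses the entire HM identity for $\mathcal{M}_{\mathcal{A}}$ into
\[
[\widehat{\mathrm{HM}_{\mathcal{M}}}]^v \, [e_v \circ e_i \circ e_j \circ e_k \circ e_l]^{m} = 0,
\]
which vanishes because $\mathrm{HM}_{\mathcal{M}} \equiv 0$ by the hypothesis that $\mathcal{M}$ is an $F$-manifold and because lifting annihilates zero.

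The principal obstacle is the index bookkeeping required to see that all six terms really do produce the same fivefold envelope $[e_v \circ e_i \circ e_j \circ e_k \circ e_l]^{m}$. This mirrors, in the weaker $F$-manifold setting, the contraction-rearrangement that drove the WDVV step in the proof of Theorem \ref{MainA}, and rests on the identity $c_{a_1 a_2}^{p_1} c_{p_1 a_3}^{p_2} \cdots c_{p_{k-1} a_k}^{m} = [e_{a_1} \circ \cdots \circ e_{a_k}]^{m}$, whose value depends only on the multiset $\{a_1, \ldots, a_k\}$ by associativity and commutativity of $\mathcal{A}$. Once the uniform factorisation is verified for one term the remaining five follow by the same mechanical pattern, and the identity is complete.
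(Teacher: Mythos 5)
The paper gives no argument here at all --- it states that the proof ``is straightforward and will be omitted'' --- so there is nothing to compare line-by-line; what you have written is a correct and reasonable way to fill the gap, and it is surely the intended one, since it mirrors the WDVV step in the proof of Theorem \ref{MainA}. Your two structural points are sound: the Hertling--Manin expression $P_{X\circ Y}(Z,W)-X\circ P_Y(Z,W)-Y\circ P_X(Z,W)$ is indeed $C^\infty$-linear in all four slots (the check in the $X,Y$ slots uses commutativity and associativity of the lifted product, which you establish first, so the order of the argument matters and is right), and on coordinate fields the identity reduces to a sum of terms of the two shapes $c\,\partial c$ and $(\partial c)\,c$, each of which lifts, via $\partial_{(\alpha i)}\hat f=\widehat{\partial_\alpha f}\circ e_i$ and $\widehat{fg}=\hat f\circ\hat g$, to $\bigl[\widehat{T_r}\circ e_i\circ e_j\circ e_k\circ e_l\bigr]{}^m$ with the same fourfold envelope $e_i\circ e_j\circ e_k\circ e_l$ (your ``fivefold'' count just includes the summed index $e_v$), so the whole expression is the lift of the scalar Hertling--Manin identity and vanishes. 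Two small caveats: the summation over the Greek contraction index spread across two hatted factors should be collapsed using linearity plus $\sum_\epsilon \widehat{f_\epsilon}\circ\widehat{g^\epsilon}=\widehat{\sum_\epsilon f_\epsilon g^\epsilon}$, which is where analyticity of the structure functions enters; and the unit is $\partial_{(1 1)}$ only if the original $F$-manifold has unit $\partial_1$ in the chosen coordinates --- in general one should lift a unit field $e=e^\alpha\partial_\alpha$ to the field with components $[\widehat{e^\alpha}]^i$, which is easily checked to be a unit for \eqref{liftedmultiplication}. Neither point affects the validity of your argument.
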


\noindent The proof is straightforward and will be omitted. The link between $F$-manifolds and equations of hydrodynamic type has been explored by a number of
authors \cite{St2,Lorenzoni} so one should be able to apply the idea of this paper to construct their $\mathcal{A}$-valued counterparts.

\medskip

In quantum cohomology, the tensor product of Frobenius manifolds generalizes the classical K\"unneth product formula. In singularity theory it corresponds
to the direct sum of singularities. If one of the manifolds is trivial then this descriptions degenerates - there is no parameter space of versal deformations.
However, one could try to construct an $\mathcal{A}$-valued singularity theory. This is purely speculative, but Arnold has constructed a theory of versal
deformations of matrices \cite{Arnold} but it remains to see if this is what would be required.

\medskip

As remarked earlier, since $\mathcal{M}_\mathcal{A}$ is a Frobenius manifold in its own right, one can apply the deformation theory developed
by Dubrovin and Zhang \cite{DuZh} directly to the hydrodynamic flows given in Theorem \ref{thm3.2}. But central to this approach is the existence of
a single $\tau$-function. However the deformations/dispersive systems constructed in sections \ref{dispersionlesssection} and \ref{dispersivesection} have $\mathcal{A}$-valued $\tau$-functions.
Thus we have two distinct deformation procedures, unless they are connected by some set of transformations. It may be possible to construct a deformation
theory along the lines of \cite{DuZh} but with an $\mathcal{A}$-valued $\tau$-function.

\medskip

This paper has concentrated on Frobenius algebra-valued integrable systems, via their Hamiltonian structure. Other approaches to integrability - the structure of $\mathcal{A}$-valued Lax
equations, for example, have not been considered here. Part of such a theory have been constructed by the authors in \cite{SZ} where an $\mathcal{A}$-valued KP hierarchy is constructed
via such $\mathcal{A}$-valued Lax equations and operators. In a different direction there are many other algebra valued generalizations of KdV equation,
from Jordan algebra to Novikov algebra-valued systems \cite{St3,SS,Sv,Sv2}. Whether such algebra-valued systems can be combined with the theory of Frobenius manifolds remains an open question.
Developing a theory
which encompasses the non-commutative/non-local hierarchies, such as the original matrix KdV equation (\ref{matrixKdV}) would be of considerable interest and would encompass the theory developed
in this paper.


\medskip

\noindent{\bf Acknowledgments.} D.\,Zuo is grateful to Professors Qing Chen,
Yi Cheng and Youjin Zhang for constant supports and also thanks the University of
Glasgow for the hospitality.
The research of D.\,Zuo is partially supported by
NSFC(11271345, 11371338), NCET-13-0550 and the Fundamental Research
Funds for the Central Universities.


\section*{Appendix}

The lifting operation (Definition \ref{LiftingDef}) was defined only for analytic functions. However this may be extended to a wider class of functions, in particular rational functions.
This observation is based on the following:

\begin{lem}\label{lem2.4} A generic element $\kappa \in \mathcal{A}$ is invertible. \end{lem}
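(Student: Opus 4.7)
The plan is to reduce invertibility of $\kappa\in\mathcal{A}$ to the non-vanishing of a single polynomial in the coordinates of $\kappa$, and then observe that this polynomial is non-identically zero because it takes the value $1$ at the identity element.

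More precisely, fix the basis $e_1,\ldots,e_n$ of $\mathcal{A}$ with structure constants $c_{ij}^{\phantom{ij}k}$, and write a general element as $\kappa = \kappa^i e_i$ with coordinates $(\kappa^1,\ldots,\kappa^n)\in\mathbb{R}^n$. Consider the regular representation
\[
L_\kappa : \mathcal{A} \longrightarrow \mathcal{A}, \qquad L_\kappa(x) = \kappa \circ x,
\]
whose matrix in the basis $\{e_j\}$ is $(L_\kappa)^{k}_{\phantom{k}j} = \kappa^{i}\,c_{ij}^{\phantom{ij}k}$. Because $\mathcal{A}$ is commutative, associative, and has unity $e=e_1$, the element $\kappa$ is invertible in $\mathcal{A}$ if and only if $L_\kappa$ is invertible as a linear map: one direction is immediate (if $\kappa^{-1}$ exists then $L_{\kappa^{-1}}=L_\kappa^{-1}$), and for the other direction, if $L_\kappa$ is invertible then there exists $x\in\mathcal{A}$ with $L_\kappa(x) = e$, i.e.\ $\kappa\circ x = e$, which exhibits $x$ as an inverse of $\kappa$.

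Consequently, the set of non-invertible elements of $\mathcal{A}$ is precisely the zero locus of
\[
P(\kappa^1,\ldots,\kappa^n) \;:=\; \det L_\kappa,
\]
which is a polynomial in the coordinates $\kappa^i$ (homogeneous of degree $n$, in fact). The final step is to show that $P$ is not the zero polynomial. This is immediate: taking $\kappa = e_1$ gives $L_{e_1}=\mathrm{Id}$, so $P(1,0,\ldots,0)=1\neq 0$. Therefore the non-invertible elements form a proper algebraic subvariety of $\mathcal{A}\cong \mathbb{R}^n$, i.e.\ a nowhere-dense, measure-zero, Zariski-closed set, so a generic $\kappa\in\mathcal{A}$ is invertible.

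There is no real obstacle here; the only subtlety worth flagging is the identification of invertibility in $\mathcal{A}$ with invertibility of $L_\kappa$, which uses both the existence of a unit and commutativity (without which one would need to distinguish left- and right-invertibility). One could alternatively express the same polynomial $P$ intrinsically via the Frobenius form by writing the Gram-type matrix $M_{jk}(\kappa) = \omega(\kappa\circ e_j\circ e_k)$ and noting $\det M(\kappa) = \det(\eta)\cdot \det L_\kappa$, which gives a coordinate-free description of the discriminant hypersurface, but this is not needed for the statement.
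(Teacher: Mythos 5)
Your proof is correct, but it takes a genuinely different route from the paper. You reduce invertibility of $\kappa$ to the invertibility of the multiplication operator $L_\kappa$ (which is legitimate, and you correctly flag that the equivalence uses the unit and commutativity/associativity), and then observe that the non-invertible locus is the zero set of the polynomial $\det L_\kappa$, which is not identically zero since it equals $1$ at the identity; this makes \lq\lq generic\rq\rq\ precise in both the Zariski and measure-theoretic senses, and it uses nothing about the Frobenius form beyond the algebra structure itself. The paper instead argues structurally: it invokes the decomposition of the Frobenius algebra into an orthogonal direct sum $\mathcal{A}=\mathcal{A}_s\oplus\mathcal{A}_n$ of a semisimple part (spanned by idempotents $\pi_i$) and a nilpotent part, shows the unity is $e=\sum_i\pi_i$, writes $\kappa=\pi+\mu$ and expands $(\kappa-\pi)^N=0$ to exhibit an explicit inverse $\Xi(\kappa,\pi)\circ\pi^{-N}$ whenever the semisimple component $\pi$ is (generically) invertible. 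What your argument buys is brevity and a clean description of the singular set as a determinantal hypersurface (your closing remark relating $\det M(\kappa)$ to $\det(\eta)\det L_\kappa$ is also correct); what the paper's argument buys is an explicit formula for the inverse and the structural insight that invertibility is governed entirely by the semisimple components of $\kappa$, which fits the semisimple/nilpotent dichotomy used elsewhere in the paper. Either proof suffices for the lemma as stated.
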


\begin{proof}By a similar argument laid out in \cite{Du2}, the Frobenius algebra $\mathcal{A}$
is isomorphic to orthogonal direct sum of a semi-simple and a nilpotent algebra,
\[
\mathcal{A}=\mathcal{A}_s \oplus \mathcal{A}_n
\]
with $\mathcal{A}_s$ having a basis $\pi_1,\cdots,\pi_s\,,$ with $\pi_j\circ\pi_l=\delta_{jl}\pi_j\,.$
Suppose the unity element of the algebra takes the form
\[
e=\sum_{i=1}^s a_i \pi_i + n
\]
where $n\in\mathcal{A}_n$ and so $n^N=0\,.$ Then $n \circ \pi_i = (1-a_i) \pi_i$ and hence $(1-a_i)^N \pi_i = 0\,.$ Thus $a_i=1$ and $n \circ \pi_i=0$. Since $e=e^N$ it follows that the unity element takes the form
\[
e=\sum_{i=1}^s \pi_i\,.
\]
Writing a generic element $\kappa\in\mathcal{A}$ as $\kappa = \pi + \mu$ (with $\pi\in\mathcal{A}_s\,,\mu\in\mathcal{A}_n$) then $(\kappa-\pi)^N=0$ for some $N\,.$ Expanding
this yields
\[
\pi^N = \kappa \circ \Xi(\kappa,\pi)
\]
for some funtion $\Xi\in\mathcal{A}\,.$ Since $\pi^N\in\mathcal{A}_s$ is invertible (generically) it follow that
\[
\kappa \circ \left\{ \Xi(\kappa,\pi) \circ \pi^{-N} \right\} = e\,.
\]
Hence the result.
\end{proof}



\end{document}